\documentclass[letterpaper, 10 pt, conference]{ieeeconf}  

\IEEEoverridecommandlockouts                              

\usepackage{makecell}
\usepackage{amsfonts}
\usepackage{amstext}
\usepackage{mathtools}
\usepackage{cases}
\usepackage{enumerate}
\usepackage{subfigure}
\usepackage{amssymb}
\usepackage{float}
\usepackage{comment}
\usepackage[dvipsnames]{xcolor}
\usepackage{hyperref}
\usepackage{matlab-prettifier}
\usepackage{multirow}

\usepackage{algorithm}
\usepackage{algpseudocode}
\usepackage{etoolbox}



\newtheorem{lemma}{\textbf{Lemma}}
\newtheorem{theorem}{\textbf{Theorem}}
\newtheorem{corollary}{\textbf{Corollary}}
\newtheorem{definition}{\textbf{Definition}}
\newtheorem{remark}{\textbf{Remark}}

\newtheorem{assumption}{\textbf{Assumption}}

\newtheorem{proposition}{\textbf{Proposition}}

\newcommand{\sign}{\mbox{sgn}}

\title{\LARGE \bf Robust Adaptive Learning Control for a Class of Non-affine Nonlinear Systems}
\author{Shuai Gao, Dong Shen, and Abdelhamid Tayebi  
\thanks{This work was supported in part by the National Natural Science Foundation of China under Grant 62573422 and Grant 62173333; and in part by the Outstanding Innovative Talents Cultivation Funded Programs 2025 of Renmin University of China.} 
\thanks{S. Gao, and D. Shen are with the School of Mathematics, Renmin University of China, Beijing 100872, China (e-mail: gaoshuai@ruc.edu.cn, dshen@ieee.org).} 
\thanks{A. Tayebi is with the Department of Electrical and Computer Engineering, Lakehead University, Thunder Bay, ON P7B 5E1, Canada (e-mail: atayebi@Lakeheadu.ca).}
}

\begin{document}

\maketitle
\thispagestyle{empty}
\pagestyle{empty}

\begin{abstract}
We address the tracking problem for a class of uncertain non-affine nonlinear systems with high relative degrees, performing non-repetitive tasks. We propose a rigorously proven, robust adaptive learning control scheme that relies on a gradient descent parameter adaptation law to handle the unknown time-varying parameters of the system, along with a state estimator that estimates the unmeasurable state variables. Furthermore, despite the inherently complex nature of the non-affine system, we provide an explicit iterative computation method to facilitate the implementation of the proposed control scheme. The paper includes a thorough analysis of the performance of the proposed control strategy, and simulation results are presented to demonstrate the effectiveness of the approach.
\end{abstract}

\section{Introduction}

The field of iterative learning control (ILC) has considerably progressed since the early developments in \cite{Arimoto}. ILC techniques focus on utilizing the results of previous operations to modify the control signal for future tasks. This enables the controller to improve its performance gradually with each iteration, ultimately achieving precise tracking after several cycles. Many controlled systems are designed to repeatedly perform similar tasks within a finite time period, such as robot manipulators \cite{TAYEBI2004}, mobile robots \cite{JIN201863}, and high-speed trains \cite{10083251}. For this type of system, ILC is an effective control strategy that leverages input and output data from previous iterations to improve the control performance in subsequent iterations. The reader is referred to the following non-exhaustive list of references in the ILC field \cite{Survey2006,Moore2007,Shen2018,SHEN2019,9773948,8747543,10423244}.

Non-affine nonlinear systems are commonly encountered in practical control scenarios, such as underwater vehicles \cite{GERANMEHR2015248}, flight control systems \cite{Boskovic2004}, and chemical reactions including PH neutralization \cite{Ge2013}. However, there is no systematic control design approach for non-affine nonlinear systems. The complexity in dealing with this type of systems arises from the fact that the inputs influence the system dynamics in a nonlinear manner, making their control a challenging task \cite{Ge2013}. Despite the considerable progress in the ILC field, there are only a few papers in the literature dealing with non-affine nonlinear systems \cite{9184115,9721014,9516984,chi2020JSC,8676086,XU2024123339}.

In \cite{9184115}, a P-type ILC scheme was proposed for non-affine nonlinear systems, and it was shown that, even for systems with relative degree of one, the  selection of the learning gain ensuring the convergence of tracking error is not a trivial task. Some data-driven ILC (DDILC) approaches, based on dynamic linearization, were also developed for non-affine nonlinear systems \cite{9721014,9516984,chi2020JSC}, where the non-affine characteristics are modeled using pseudo-partial derivatives (PPDs). However, dynamic linearization requires the system to satisfy the generalized Lipschitz condition \cite{9721014,9516984,chi2022data,10560025}, which is more stringent than the common global Lipschitz condition and is often difficult to satisfy in practice. Furthermore, the convergence process of the DDILC method is based on the contraction mapping principle, which is strongly related to the task being performed. Once the reference trajectory varies, the contraction process must restart. Therefore, the DDILC approach encounters challenges in effectively tracking iteration-varying reference trajectories. Neural networks (NN) approximation capabilities have also been leveraged to design ILC strategies for the tracking problem of non-repetitive reference trajectories for non-affine nonlinear systems \cite{8676086,XU2024123339}. However, the approximation result of the NN only holds on a local compact set. The issue of ensuring that the system state remains within this compact set has not been adequately addressed. This local approximation issue weakens the theoretical completeness of NN-based ILC. On the other hand, network complexity and approximation accuracy are inherently conflicting goals and cannot be optimized simultaneously, which further limits the practicality of NNs.

The core idea behind the aforementioned methods, consisting in transforming non-affine nonlinear systems into equivalent linear systems via dynamic linearization or using neural networks to approximate the learning controller, is to bypass the complexities of non-affine systems. As such, these approaches are considered indirect control strategies. This raises the question of whether it is possible to develop direct ILC methods for certain non-affine nonlinear systems that can effectively track reference trajectories, which change across iterations.

Real-world non-affine systems often possess a well-defined physical basis ({\textit e.g.}, the aircraft system in \cite{Boskovic2004} and the double inverted pendulum system in \cite{Hovakimyan01012001}). Their dynamic structure can typically be derived from physical laws, while the main source of uncertainty lies in unknown parameters. Such parameterized non-affine systems are both common in engineering practice and structurally exploitable, which is consistent with our intention of designing a direct ILC approach to deal with this type of systems.
For the sake of generality, we accommodate systems with high relative degrees and non-repetitive disturbances. By employing appropriate system identification techniques and treating identification errors as unknown disturbances, our framework can potentially be extended to more general, non-parameterized systems. For parameterized systems, adaptive ILC (AILC), which combines ILC with traditional adaptive control, is a natural and promising approach. However, most existing AILC results have been restricted to affine systems \cite{TAYEBI2004,Tayebi2007,Chien-Tayebi2008,10290922,10324301,Yu2013DiscretetimeAI,7880611,CHI20082207}, leaving the class of parameterized non-affine systems largely unexplored.

In the present work, we propose a direct AILC scheme for a class of uncertain discrete-time parameterized non-affine nonlinear systems, with high relative degrees and non-repetitive disturbances, enabling effective tracking of iteration-varying reference trajectories over finite time-intervals. Compared to the existing works on AILC for affine systems \cite{TAYEBI2004,Tayebi2007,Chien-Tayebi2008,10290922,10324301,Yu2013DiscretetimeAI,7880611,CHI20082207}, the non-affine characteristic presents unique technical challenges for controller design and analysis. First, the feasibility of the trajectory tracking problem in the affine case is determined by the input gain. In the non-affine case, however, it is equivalent to the solvability of a nonlinear equation with respect to the input variable, for which no conclusions are available. Second, the AILC input for affine systems is derived from the certainty equivalence principle and given as an explicit expression. However, in the non-affine case, the AILC input is given in the form of an implicit function manner. Therefore, an efficient numerical implementation scheme for the control input must be designed. Furthermore, analyzing the impact of numerical approximation errors of the control input on the trajectory tracking error is a unique challenge for non-affine systems. Finally, the consideration of practical factors such as non-repetitive disturbances and high relative degrees, while enhancing the generalizability of the control scheme, further complicates these technical challenges. Therefore, the problem considered in this work cannot be solved within the current AILC  framework and requires new technical methods.

We rely on a gradient descent parametric adaptation (GDPA) law along with a state estimator that provides estimates for the unmeasurable state variables, to derive a robust AILC ensuring the convergence of the tracking error to a neighborhood of zero in the presence of external disturbances, and a perfect tracking in the absence of disturbances. The existence and uniqueness of the control input is proven using the global implicit function theorem \cite{ZHANG2006251}.

Since the system at hand is non-affine, an explicit solution for the control input may not always exist. To address this issue, we propose a contraction mapping-based iterative numerical computation method that approximates the control input, facilitating the implementation of our proposed control strategy. A detailed analysis of the impact of the approximation errors on the tracking accuracy is provided.  Two simulation examples are also provided to illustrate the effectiveness of the proposed approach.

The main contributions of this work are outlined below:
\begin{itemize}
\item {We propose an effective control scheme for a class of uncertain non-affine nonlinear systems, with high relative degrees and non-repetitive disturbances. Due to the inherent complexity of this type of systems, studies on their control design remain relatively scarce.}

\item {By analyzing the structural characteristics of non-affine systems, we propose a direct AILC method that is significantly different from the existing DDILC \cite{9721014,9516984,chi2020JSC} and NN-based ILC \cite{8676086,XU2024123339} methods. It avoids the use of neural networks and relaxes the stringent requirements on the system and reference trajectory (for example, the generalized Lipschitz condition and a repetitive reference trajectory). The proposed design path consisting of \textit{unknown parameters estimation}, \textit{implicit control input design} and \textit{explicit numerical approximation of the designed control input}, provides a new idea for solving trajectory tracking problems of non-affine nonlinear systems.}

\item {Different from existing works on AILC \cite{TAYEBI2004,Tayebi2007,Chien-Tayebi2008,10290922,10324301,Yu2013DiscretetimeAI,7880611,CHI20082207}, we develop an AILC method for non-affine systems, conduct the feasibility analysis of the tracking problem, and provide the implicit control input design and its numerical implementation algorithm. The relationship between tracking accuracy and the disturbances, relative degree, and the numerical approximation error of the control input is quantitatively characterized.}
\end{itemize}

The remainder of this paper is organized as follows. Section \ref{PF} formulates the tracking control problem. The feasibility analysis of the tracking control problem and AILC design are provided in Section \ref{Cdesign}. The performance of the proposed AILC scheme is analyzed in Section \ref{BasicTracking}. Section \ref{Iterativemethod} provides an explicit and implementable solution for the proposed AILC scheme. Section \ref{simu} provides some simulation examples. Section \ref{conclu} wraps up this paper.

\emph{Notations}: The set of positive integers is denoted by $\mathbb{Z}^+$. The set of real numbers and the $n$-dimensional real space are denoted by $\mathbb{R}$ and  $\mathbb{R}^n$, respectively. For a vector $\boldsymbol{\alpha}\in \mathbb{R}^n$, $\|\boldsymbol{\alpha}\|$ denotes the 2-norm of $\boldsymbol{\alpha}$. The ball in $\mathbb{R}^n$ with center $\boldsymbol{x_0}$ and radius $r_0$ is denoted by $\mathcal{B}(\boldsymbol{x_0},r_0):=\{\boldsymbol{x} \in \mathbb{R}^n, ~\|\boldsymbol{x}-\boldsymbol{x_0}\|\leq r_0\}$.
\section{Problem Formulation}\label{PF}

Consider the following non-affine nonlinear repetitive system:
\begin{align}\label{system}
    x_k(t+\rho)=F(\boldsymbol{X}_k(t),u_k(t),t)+w_k(t),
\end{align}
where $\boldsymbol{X}_k(t)=[x_k(t+\rho-1),x_k(t+\rho-2),\cdots,x_k(t)]^\top$. The variables $x_k(t)\in \mathbb{R}$, $u_k(t)\in \mathbb{R}$, and $w_k(t)\in \mathbb{R}$ denote the measurable state, system input, and unknown disturbance, respectively. The  time variable is denoted by $t\in\{0,1,\cdots,T-\rho\}$ and the iteration variable is denoted by $k\in \mathbb{Z}^+$. The nonlinear function $F(\boldsymbol{X}_k(t),u_k(t),t)$ satisfies the following:
\begin{align*}
    F(\boldsymbol{X}_k(t),u_k(t),t)&=\boldsymbol\theta(t)^\top \boldsymbol{f}(\boldsymbol{X}_k(t),u_k(t)),
\end{align*}
where $\boldsymbol\theta(t)\in \mathbb{R}^{p}$ is an unknown time-varying parameter vector, and $\boldsymbol{f}(\boldsymbol X,u): \mathbb{R}^{\rho+1} \to \mathbb{R}^p$ is a known function vector with sufficiently smooth components. According to the definition in \cite{802910}, the concept of relative degree denotes the delay between the input and the output. For system \eqref{system}, the control input at time $t$ determines the output $\rho$ steps later. Therefore, the relative degree of \eqref{system} is $\rho$.

For our control design, we need the following assumptions:
\begin{assumption}\label{assu1}
 There exists a known closed neighborhood $\mathcal{B}(\underline{\boldsymbol\theta},R)$ such that
        $\boldsymbol\theta(t)\in \mathcal{B}(\underline{\boldsymbol\theta},R)$, and
        for any $p$-dimensional vector $\boldsymbol\phi \in \mathcal{B}(\underline{\boldsymbol\theta},R)$, $\boldsymbol{X}\in \mathbb{R}^{\rho}$, $u\in\mathbb{R}$, the inequality
$\left| \boldsymbol\phi^\top\frac{\partial}{\partial u} \boldsymbol{f}(\boldsymbol{X},u)\right|>d_0$ holds, where $d_0$ is an unknown positive constant.
\end{assumption}
\begin{assumption}\label{assu2}
The system disturbance $w_k(t)$ is bounded, {\textit i.e.,}  $w\triangleq\sup_{k,t}|w_k(t)|<\infty$,
where $w$ is an unknown supremum.
\end{assumption}
\begin{assumption}\label{assu3}
The vector-valued function $\boldsymbol{f}(\boldsymbol{X},u)$ satisfies the global Lipschitz condition. In other words, there exist positive constants $L^{\boldsymbol{X}}$ and $L^{u}$ such that
\begin{align*}
&\|\boldsymbol{f}(\boldsymbol{X}_1,u)-\boldsymbol{f}(\boldsymbol{X}_2,u)\|\leq L^{\boldsymbol{X}}\|\boldsymbol{X}_1-\boldsymbol{X}_2\|,\ \\
&\|\boldsymbol{f}(\boldsymbol{X},u_1)-\boldsymbol{f}(\boldsymbol{X},u_2)\|\leq L^u|u_1-u_2|.
\end{align*}
\end{assumption}
\begin{assumption}\label{assu4}
 The variables $x_k(0)$, $x_k(1)$,..., $x_k(\rho-1)$ are bounded and known for all $k\in \mathbb{Z}^+$.
\end{assumption}
\begin{remark}\label{remark1}
 Assumption \ref{assu1} is equivalent to requiring a non-vanishing input gain for affine systems as in \cite{CHI20082207,Yu2013DiscretetimeAI} and is often assumed for non-affine systems \cite{8957439,8676086}. While this study focuses on single-input, single-output (SISO) systems, the proposed AILC framework is expected to generalize to multiple-input, multiple-output (MIMO) scenarios. The key is to identify reasonable assumptions that guarantee tracking feasibility (similar to Assumption 1 in the SISO case). For example, if the Jacobian matrix of the system function with respect to the input vector is strictly diagonally dominant, then the generalization can be achieved. More general feasibility conditions need further detailed exploration. Since most disturbances in real-world systems are bounded, Assumption \ref{assu2} is realistic and commonly adopted \cite{9321505,10750414,9656625}. Moreover, the results of this study can be extended to certain specific cases of unbounded disturbances, such as parameterized disturbances. Assumption \ref{assu3} imposes some restrictions on the nonlinearity as in \cite{9321505,1266787,4026648}, which is satisfied for a large class of practical non-affine systems, {\textit e.g.,} complex circuits \cite{6858093} and single-link manipulators \cite{6389712}. Given the system's relative degree $\rho$, the initial $\rho$ values of each iteration, $x_k(0)$, $x_k(1)$,..., $x_k(\rho-1)$, cannot be controlled and are determined by resetting the system before the start of iteration $k$. Therefore, we assume them to be known and bounded.
\end{remark}
\begin{remark}\label{remark2}
The following strict feedback system is commonly found in the existing literature \cite{10542381,10485507,CHI20082207}:
\begin{align}\label{strictfeedback}
\left\{ \begin{array}{l}
	x_k^{(j)}(t+1)=x_k^{(j+1)}(t)+f_{j}\big(x_k^{(1)}(t),\cdots,x_k^{(j)}(t)\big),  \\
 \hspace{50mm} j=1\cdots \rho-1, \\
        x_k^{(\rho)}(t+1)=\boldsymbol\theta(t)^\top f_{\rho}(x_k^{(1)}(t),\cdots,x_k^{(\rho)}(t),u_k(t)),
	\end{array}\right.
\end{align}
where $f_j,\ j=1,\cdots,\rho$ are known functions satisfying the global Lipschitz condition, and $\boldsymbol\theta(t)$ is an unknown time-varying parameter vector. Consider $x_k^{(1)}(t)$ as a measurable state and rewrite $x_k^{(i)}(t)$ as a function of $x_k^{(1)}(t),x_k^{(1)}(t+1),\cdots,x_k^{(1)}(t+i-1)$, $i=1,\cdots,\rho$. Algebraic recursions verify that system \eqref{strictfeedback} can be transformed into the form of system \eqref{system}.
\end{remark}

\emph{Objective}: Under Assumptions \ref{assu1}-\ref{assu4}, design an AILC input sequence $\{u_k(t)\}_{k=1}^\infty$ such that the state $\{x_k(t)\}_{k=1}^\infty$ tracks the iteration-varying reference trajectory $\{r_k(t)\}_{k=1}^\infty$ as accurately as possible for all  $t\in \{\rho,\cdots, T\}$ when $k$ tends to infinity. Denoting the tracking error as $e_k(t)=x_k(t)-r_k(t)$, our AILC should guarantee
\begin{align*}
\limsup_{k\to \infty} |e_k(t)|\in \mathcal{B}(0,\mathcal{R}(w)), \ t\in \{\rho,\cdots, T \},
\end{align*}
where $\mathcal{R}(w)$ is the radius of the convergence ball depending on the disturbance supremum, and
\begin{align*}
\lim_{k\to \infty} |e_k(t)|=0,\  t\in \{\rho,\cdots, T\},
\end{align*}
in the absence of disturbances (\textit{i.e.,} $w_k(t)\equiv 0$).

To achieve the control objective, we must design a parametric adaptation law to deal with the unknown parameter vector $\boldsymbol\theta(t)$ in the control scheme and a state estimator that provides estimates of the unmeasurable state variables. Since an explicit solution for the AILC input may not always exist, an efficient numerical method will be proposed to design an explicitly implementable control input.

\section{Tracking Feasibility and AILC Design}\label{Cdesign}

\subsection{Tracking Feasibility Analysis}\label{Trackingability}

Let us first focus on the feasibility of the tracking control problem. We will show the existence of an ideal control input $u_k^\ast(t)$ such that $x_k(t+\rho) = r_k(t+\rho)$ for all  $t\in \{0,\cdots, T-\rho \}$ for system \eqref{system} with $w_k(t)\equiv 0$. To this end, it is sufficient to check the solvability of the following equation with respect to $u$:
\begin{align} \label{trackingequation}
      \boldsymbol{\theta}(t)^\top \boldsymbol{f}(\boldsymbol{X}_k(t),u)-r_k(t+\rho)=0.
\end{align}

As per Assumption \ref{assu1}, one has $|\boldsymbol\theta(t)^\top\frac{\partial }{\partial u}\boldsymbol{f}(\boldsymbol{X}_k(t),u)|> d_0>0$. For the sake of simplicity and without loss of generality, let us consider the case $\boldsymbol\theta(t)^\top\frac{\partial }{\partial u}\boldsymbol{f}(\boldsymbol{X}_k(t),u)> d_0$, and define the mapping \(\mathcal{Z}:\mathbb{R}\to\mathbb{R}\), \(u\mapsto \mathcal{Z}(u)=\boldsymbol\theta(t)^\top \boldsymbol{f}(\boldsymbol{X}_k(t),u)-r_k(t+\rho)\).
Let $c=\mathcal{Z}(0)$. If $c=0$, then  $\boldsymbol{\theta}(t)^\top \boldsymbol{f}(\boldsymbol{X}_k(t),0)=r_k(t+\rho)$, which means $u=0$ already satisfies \eqref{trackingequation}. If $c\neq 0$, we define the following mapping: \(\mathcal{T}:\ \mathcal{B}\Big(0,\frac{|c|}{d_0}\Big)\to\mathbb{R},~ u\mapsto\mathcal{T}(u)=u-\frac{1}{l}\mathcal{Z}(u)\), where $l>0$ is a constant such that $\mathcal{T}$ is a contraction mapping on $\mathcal{B}\Big(0,\frac{|c|}{d_0}\Big)$. Then $\mathcal{T}$ has a unique fixed point, which is the solution of \eqref{trackingequation} as stated in the following proposition:
\begin{proposition}\label{pp1}
    Under Assumption \ref{assu1}, for any $t\in \{0,1,\cdots$, $T-\rho\}$ and $k\in \mathbb{Z}^+$, there exists a unique ideal control input $u=u_k^\ast(t)$ such that \eqref{trackingequation} holds. In addition, there exists a known contraction mapping $\mathcal{T}$ such that $u_k^\ast(t)$ is the fixed point of $\mathcal{T}$, {\textit i.e.,} $\mathcal{T}(u_k^\ast(t))=u_k^\ast(t)$.
\end{proposition}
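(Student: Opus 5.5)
The plan is to verify directly that the mapping $\mathcal{T}(u)=u-\frac{1}{l}\mathcal{Z}(u)$ defined before Proposition \ref{pp1} is a self-map and a contraction of the closed interval $\mathcal{B}\big(0,\frac{|c|}{d_0}\big)$ for a suitable choice of $l$. The Banach fixed point theorem then gives existence, and strict monotonicity of $\mathcal{Z}$ on all of $\mathbb{R}$ gives global uniqueness.

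\textbf{Step 1 (reduce to the positive-sign case).} Fix $t$, $k$ and $\boldsymbol{X}_k(t)$, so that $\mathcal{Z}$ is a smooth scalar function of $u$.
\begin{itemize}
\item By Assumption \ref{assu1}, $|\mathcal{Z}'(u)|>d_0$ for every $u\in\mathbb{R}$.
\item $\mathcal{Z}'$ is continuous and $\mathbb{R}$ is connected, so $\mathcal{Z}'$ cannot change sign.
\item This justifies the paper's reduction to $\mathcal{Z}'(u)>d_0$. The opposite case is handled identically after replacing $\mathcal{Z}$ by $-\mathcal{Z}$, equivalently taking $l<0$.
\end{itemize}

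\textbf{Step 2 (upper bound on the slope and choice of $l$).} Assumption \ref{assu3} gives $\|\partial_u\boldsymbol f\|\le L^u$. Since $\boldsymbol\theta(t)\in\mathcal{B}(\underline{\boldsymbol\theta},R)$, we have
\[
d_0<\mathcal{Z}'(u)\le (\|\underline{\boldsymbol\theta}\|+R)L^u=:\bar L .
\]
The bound $\bar L$ depends only on known quantities. Choose any $l\ge \bar L$, which automatically gives $l>d_0$. Then
\[
\mathcal{T}'(u)=1-\tfrac{1}{l}\mathcal{Z}'(u)\in\big[0,\,1-\tfrac{d_0}{l}\big).
\]
By the mean value theorem, $|\mathcal{T}(u_1)-\mathcal{T}(u_2)|\le \lambda|u_1-u_2|$ with $\lambda=1-d_0/l<1$.

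\textbf{Step 3 (invariance of the ball).} This is the step requiring care. Using $\mathcal{T}(0)=-c/l$, for $|u|\le |c|/d_0$ we get
\[
|\mathcal{T}(u)|\le|\mathcal{T}(u)-\mathcal{T}(0)|+|\mathcal{T}(0)|\le \Big(1-\tfrac{d_0}{l}\Big)\tfrac{|c|}{d_0}+\tfrac{|c|}{l}=\tfrac{|c|}{d_0}.
\]
So $\mathcal{T}$ maps the closed, hence complete, set $\mathcal{B}(0,|c|/d_0)$ into itself. Banach's theorem yields a unique fixed point $u_k^\ast(t)$ in this ball, and $\mathcal{T}(u^\ast)=u^\ast$ is equivalent to $\mathcal{Z}(u^\ast)=0$, i.e.\ to \eqref{trackingequation}. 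The trivial case $c=0$ gives $u^\ast=0$ directly.

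\textbf{Step 4 (global uniqueness).} Uniqueness on all of $\mathbb{R}$, not only in the ball, follows because $\mathcal{Z}$ is strictly increasing: $\mathcal{Z}(u_1)-\mathcal{Z}(u_2)>d_0(u_1-u_2)$ for $u_1>u_2$. Hence $\mathcal{Z}$ has at most one zero.

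\textbf{Main obstacle.} The main subtlety is that $\mathcal{T}$ must be ``known'' even though $d_0$ and $\boldsymbol\theta(t)$ are unknown. The resolution is that $l$ depends only on the known quantities $\underline{\boldsymbol\theta}$, $R$ and $L^u$. The ball radius $|c|/d_0$ is used only in the analysis: in fact $\mathcal{T}$ is a $\lambda$-contraction on all of $\mathbb{R}$, so the fixed point exists there as well. The ball merely localizes the solution.
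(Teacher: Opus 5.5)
Your proof follows the paper's route. You use the same map $\mathcal{T}(u)=u-\frac{1}{l}\mathcal{Z}(u)$ on $\mathcal{B}(0,\frac{|c|}{d_0})$ and the same invariance estimate $|\mathcal{T}(u)|\le\frac{|c|}{l}+(1-\frac{d_0}{l})\frac{|c|}{d_0}=\frac{|c|}{d_0}$; you obtain it correctly via the triangle inequality, where the paper's chain writes an equality. You then conclude with Banach's theorem, as the paper does.

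The one real problem is Step 2. Proposition~\ref{pp1} is stated under Assumption~\ref{assu1} only, but your slope bound $\bar L=(\|\underline{\boldsymbol\theta}\|+R)L^u$ comes from Assumption~\ref{assu3}. Under Assumption~\ref{assu1} alone, $d\mathcal{Z}/du$ can be unbounded on $\mathbb{R}$. For example, $\mathcal{Z}(u)=u^3+u-r_k(t+\rho)$ has slope $>d_0$ for any $d_0<1$, yet $1-\frac{1}{l}(3u^2+1)$ is unbounded for every $l$. So no $l$ makes $\mathcal{T}$ a contraction on $\mathbb{R}$, and your closing remark fails too. The paper avoids this by taking $l>M:=\sup_{|u|\le |c|/d_0}|d\mathcal{Z}/du|$, which is finite by continuity on a compact set. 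Your Step 3 only uses the Lipschitz estimate on the convex ball, so replacing $\bar L$ by $M$ repairs your argument with no other change.

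Your Steps 1 and 4 use only Assumption~\ref{assu1} and are genuine additions. The paper takes the sign reduction for granted. Its Banach argument gives uniqueness only inside $\mathcal{B}(0,\frac{|c|}{d_0})$, while the statement asserts a unique solution of \eqref{trackingequation} in $\mathbb{R}$; your inequality $\mathcal{Z}(u_1)-\mathcal{Z}(u_2)>d_0(u_1-u_2)$ settles that.

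Your ``main obstacle'' discussion resolves less than it claims. $\mathcal{T}$ contains $\boldsymbol\theta(t)$ and the unmeasured entries of $\boldsymbol X_k(t)$, so it is not computable whatever $l$ is. ``Known'' in the proposition must be read informally; the paper's own $l>M$ depends on $\boldsymbol\theta(t)$ and $d_0$. Your Assumption~\ref{assu3} bound does pay off for the implementable map $\mathcal{T}'$ of Proposition~\ref{pp2}. There $\hat{\boldsymbol\theta}_k(t)\in\mathcal{B}(\underline{\boldsymbol\theta},R)$, so any $l'>(\|\underline{\boldsymbol\theta}\|+R)L^u$ is an admissible gain that does not require the unknown radius $|c'|/d_0$.
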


\emph{Proof:~} See Appendix A.

\subsection{GDPA Scheme}\label{GDPAandObserver}

Proposition \ref{pp1} shows that, if $w_k(t)\equiv 0$, there exists an ideal control input $u_k^\ast(t)$ leading to a perfect tracking $x_k(t+\rho)=r_k(t+\rho)$. However, $u_k^\ast(t)$ depends on $\boldsymbol\theta(t)$, which is unknown, and $\boldsymbol{X}_k(t)$, which contains unmeasurable components $x_k(t+1)$, $x_k(t+\rho-2)$,..., $x_k(t+\rho-1)$. Therefore, we must estimate $\boldsymbol\theta(t)$ and $\boldsymbol{X}_k(t)$. In the sequel, we will denote the estimate of $\boldsymbol\theta(t)$ at the $k$-th iteration by $\hat{\boldsymbol\theta}_k(t)$. The adaptation law of $\hat{\boldsymbol\theta}_k(t)$ is obtained by minimizing the following cost function:
\begin{align*}
    J(\hat{\boldsymbol\theta}(t))=\frac{(x_k(t+\rho)-\hat{\boldsymbol\theta}(t)^\top \boldsymbol{f}(\boldsymbol{X}_k(t),u_k(t)))^2}{2m_k(t)^2},\ \textbf{w.r.t}\ \hat{\boldsymbol\theta}(t)
\end{align*}
where $m_k(t)=\sqrt{1+\|\boldsymbol{f}(\boldsymbol{X}_k(t),u_k(t))\|^2}$ is a normalized factor. Using the gradient descent method, one can derive the parametric adaptation law in the following form:
\begin{align*}
\hat{\boldsymbol\theta}_{k+1}(t)&=\hat{\boldsymbol\theta}_k(t)+\eta \frac{(x_k(t+\rho)-\hat{\boldsymbol\theta}_k(t)^\top \boldsymbol{f}(\boldsymbol{X}_k(t),u_k(t)))}{m_k(t)^2}\\
&\quad\times \boldsymbol{f}(\boldsymbol{X}_k(t),u_k(t)),
\end{align*}
where $\eta$ is an adaptation gain.

Due to system disturbances, robustness modifications are necessary. We introduce an iteration-varying dead-zone function $a_k(t)$ and a projection operator $\mathrm{Proj}(\cdot)$ to design the final GDPA law as follows:
\begin{align}
 &\epsilon_k(t+\rho)=\frac{x_k(t+\rho)-\hat{\boldsymbol\theta}_k(t)^{\top}\boldsymbol{f}(\boldsymbol{X}_k(t),u_k(t))}{m_k(t)^2},\label{GDPA1} \\
   &\Bar{\boldsymbol\theta}_{k+1}(t)=\hat{\boldsymbol\theta}_{k}(t)+\eta a_k(t)\epsilon_k(t+\rho)\boldsymbol{f}(\boldsymbol{X}_k(t),u_k(t)),\label{GDPA2}\\
   &a_k(t)=\left\{ \begin{array}{l}
	 0,\ \text{if}\  |\epsilon_k(t+\rho)|\leq \frac{\hat{w}_{k}(t)}{m_k(t)^2},\\
 1-\frac{\hat{w}_k(t)}{|\epsilon_k(t+\rho)|m_k(t)^2}, \ \text{otherwise},
	\end{array}\right. \label{GDPA3}\\
 & \hat{w}_{k+1}(t)=\hat{w}_{k}(t)+\eta a_k(t)|\epsilon_k(t+\rho)|, \label{GDPA4}\\
   &\hat{\boldsymbol\theta}_{k+1} (t)=\mathrm{Proj}_{\mathcal{B}(\underline{\boldsymbol\theta},R)}\left(\Bar{\boldsymbol\theta}_{k+1} (t)\right)\nonumber  \\
   &\qquad\quad\:=\left\{ \begin{array}{l}
   \Bar{\boldsymbol\theta}_{k+1} (t),\  \text{if}\  \Bar{\boldsymbol\theta}_{k+1} (t)\in \mathcal{B}(\underline{\boldsymbol\theta},R), \\
	\underline{\boldsymbol\theta}+\frac{R(\Bar{\boldsymbol\theta}_{k+1} (t)-\underline{\boldsymbol\theta})}{\|\Bar{\boldsymbol\theta}_{k+1} (t)-\underline{\boldsymbol\theta}\|},\ \text{otherwise}.\end{array}\right.  \label{GDPA5}
\end{align}
The initial value $\hat{\boldsymbol\theta}_0(t)$ is chosen arbitrarily. The variable $\epsilon_k(t+\rho)$ represents the normalized modeling error with factor $m_k(t)$. The dead-zone signal $a_k(t)$ compensates for system disturbances. Given the unknown disturbance supremum $w$, we use the estimated supremum $\hat{w}_{k}(t)$, initializing $\hat{w}_{0}(t)=0$ to ensure $\hat{w}_{k}(t)\geq 0$. The projection operator $\mathrm{Proj}_{\mathcal{B}(\underline{\boldsymbol\theta},R)}(\cdot)$ restricts $\hat{\boldsymbol\theta}_k(t)$ within the closed ball $\mathcal{B}(\underline{\boldsymbol\theta},R)$. The estimates $\hat{\boldsymbol\theta}_k(t)$ and $\hat{w}_k(t)$ are updated iteratively using data from previous trials, ensuring that the GDPA law \eqref{GDPA1}-\eqref{GDPA5} adheres to causality.

\subsection{State Estimation}

 Applying the GDPA law, at time $t$ of the $k$-th iteration, the state variables $x_k(0),\cdots,x_k(t)$, inputs $u_k(0),\cdots,u_k(t-1)$, and the parameter estimates $\hat{\boldsymbol\theta}_k(0),\cdots,\hat{\boldsymbol\theta}_k(T-\rho)$ are available. This information is then used to design the state estimator to estimate the unmeasurable state variables in $\boldsymbol{X}_k(t)$ as follows:

{\small
\begin{align}
x^e_k(t+\rho-1|t)&=\hat{\boldsymbol\theta}_k(t-1)^\top \boldsymbol{f}\big(x^e_k(t+\rho-2|t),\cdots, \label{estimateX1}\\
~&x^e_k(t+1|t),x_k(t),x_k(t-1),u_k(t-1)\big) \nonumber \\
x^e_k(t+\rho-2|t)&=\hat{\boldsymbol\theta}_k(t-2)^\top \boldsymbol{f}\big(x^e_k(t+\rho-3|t),\cdots,\\
~&x^e_k(t+1|t),x_k(t),x_k(t-1),x_k(t-2),u_k(t-2)\big) \nonumber \\
&~~\vdots \nonumber\\ \nonumber
x_k^e(t+1|t)&=\hat{\boldsymbol\theta}_k(t-\rho+1)^\top \boldsymbol{f}\big(x_k(t),\cdots,x_k(t-\rho+1),\\
~&u_k(t-\rho+1)\big), \label{estimateX2} \\
x_k^e(t|t)&=x_k(t), \label{estimateX3}
\end{align}
}For $0 \leq t'\leq  t \leq \rho-1$, we set $x_k^e(t|t')=x_k(t)$ as $x_k(0)$, $\ldots$, $x_k(\rho-1)$ are available (recall Assumption \ref{assu4}). The estimated state at time $t+\rho$ is obtained by the following model:
\begin{align}\label{predictionmodel}
x^e_k(t+\rho|t)&=\hat{\boldsymbol\theta}_k(t)^\top \boldsymbol{f}(\boldsymbol{X}_k^e(t),u_k(t)),
\end{align}
where $\boldsymbol{X}_k^e(t)=[x^e_k(t+\rho-1|t),\cdots,x_k^e(t+1|t),x_k^e(t|t)]^\top$.
\subsection{Implicit AILC Design}\label{Predictionandinput}
 Now we will use model \eqref{predictionmodel} with the available data to show the existence of an implicit AILC input $u_k(t)$. Letting $x^e_k(t+\rho|t) = r_k(t+\rho)$ and considering the model \eqref{predictionmodel}, the implicit AILC input $u_k(t)$ is the solution of the following equation:
 \begin{align}\label{pred-model}
 \hat{\boldsymbol\theta}_k(t)^\top \boldsymbol{f}(\boldsymbol{X}_k^e(t),u_k(t))= r_k(t+\rho).
\end{align}

The existence of the AILC input $u_k(t)$, as a solution of \eqref{pred-model}, is stated in the following proposition:

\begin{proposition}\label{pp2}
Under Assumption \ref{assu1}, there exists a unique AILC input $u_k(t)$ such that \eqref{pred-model} holds. In addition, there exists a  known contraction mapping  $\mathcal{T}':\mathcal{B}(0,\frac{|c'|}{d_0})\to\mathbb{R}$, $\mathcal{T}'(u)=u-\frac{1}{l'}\mathcal{Z}'(u)$ such that $u:=u_k(t)$ is the fixed point of $\mathcal{T}'$, where $\mathcal{Z}'(u)=\hat{\boldsymbol\theta}_k(t)^\top \boldsymbol{f}(\boldsymbol{X}_k^e(t),u)-r_k(t+\rho)$, $c'=\mathcal{Z}'(0)$ and  $l'>\sup_{u\in \mathcal{B}(0,\frac{|c'|}{d_0})}|\hat{\boldsymbol\theta}_k(t)^\top\frac{\partial}{\partial u} \boldsymbol{f}(\boldsymbol{X}_k^e(t),u)|$.
\end{proposition}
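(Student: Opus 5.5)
The argument runs parallel to Proposition \ref{pp1}, with $\boldsymbol\theta(t)$ replaced by $\hat{\boldsymbol\theta}_k(t)$ and $\boldsymbol{X}_k(t)$ replaced by $\boldsymbol{X}_k^e(t)$. The projection \eqref{GDPA5} keeps $\hat{\boldsymbol\theta}_k(t)\in\mathcal{B}(\underline{\boldsymbol\theta},R)$ for every $k\geq 1$; for $k=0$ we take $\hat{\boldsymbol\theta}_0(t)$ in this ball, which is the natural convention since the projection is applied anyway. Assumption \ref{assu1} then applies with $\boldsymbol\phi=\hat{\boldsymbol\theta}_k(t)$ and $\boldsymbol{X}=\boldsymbol{X}_k^e(t)$ (a fixed, available vector at time $t$). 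It gives $|\frac{d}{du}\mathcal{Z}'(u)|>d_0$ for all $u\in\mathbb{R}$. Since this derivative is continuous and never vanishes, it has a constant sign; without loss of generality $\frac{d}{du}\mathcal{Z}'(u)>d_0$, the other case being symmetric after replacing $\mathcal{Z}'$ by $-\mathcal{Z}'$ (equivalently $l'$ by $-l'$).

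\emph{Existence and uniqueness.} Uniqueness is immediate because $\mathcal{Z}'$ is strictly increasing. For existence, if $c'=\mathcal{Z}'(0)=0$ then $u=0$ works. Otherwise, by the mean value theorem, $\mathcal{Z}'(|c'|/d_0)\geq c'+|c'|>0$ when $c'<0$, and $\mathcal{Z}'(-|c'|/d_0)\leq c'-|c'|<0$ when $c'>0$. The intermediate value theorem then places a root $u_k(t)$ in $\mathcal{B}(0,|c'|/d_0)$. Alternatively, the root can be obtained from the contraction argument below.

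\emph{Contraction.} On the compact interval $\mathcal{B}(0,|c'|/d_0)$, smoothness of $\boldsymbol{f}$ makes $\sup|\hat{\boldsymbol\theta}_k(t)^\top\partial_u\boldsymbol{f}|$ finite, so $l'$ as in the statement exists and is computable from known data. Then
\[
\mathcal{T}'^{\prime}(u)=1-\frac{1}{l'}\frac{d}{du}\mathcal{Z}'(u)\in\Big[1-\frac{M}{l'},\,1-\frac{d_0}{l'}\Big]\subset\Big(0,\,1-\frac{d_0}{l'}\Big],
\]
where $M$ is the supremum. Hence $|\mathcal{T}'(u_1)-\mathcal{T}'(u_2)|\leq(1-d_0/l')|u_1-u_2|$, with contraction factor strictly less than one.

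The main obstacle is showing that $\mathcal{T}'$ maps the ball into itself, so that Banach's theorem applies; the statement allows codomain $\mathbb{R}$, but self-mapping is needed for the iteration. Assume $c'>0$ and $u\in[-|c'|/d_0,|c'|/d_0]$; the case $c'<0$ is symmetric. Write $\mathcal{Z}'(u)=c'+g u$ with $g\in(d_0,M]$ by the mean value theorem. Then
\[
\mathcal{T}'(u)=\Big(1-\frac{g}{l'}\Big)u-\frac{c'}{l'}.
\]
The upper bound is $\mathcal{T}'(u)\leq(1-d_0/l')|c'|/d_0<|c'|/d_0$. The lower bound is $\mathcal{T}'(u)\geq-(1-g/l')|c'|/d_0-c'/l'\geq-|c'|/d_0+(g-d_0)|c'|/(l'd_0)\geq-|c'|/d_0$.

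So $\mathcal{T}'$ is a self-map contraction on a complete set. Banach's theorem yields a unique fixed point there, and a fixed point of $\mathcal{T}'$ is exactly a zero of $\mathcal{Z}'$, i.e. it satisfies \eqref{pred-model}. By uniqueness of the zero on all of $\mathbb{R}$, this fixed point is $u_k(t)$.
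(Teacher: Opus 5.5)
Your argument is correct and follows the paper's route, which is the proof of Proposition~\ref{pp1} (Appendix~A) with $\hat{\boldsymbol\theta}_k(t)$ and $\boldsymbol{X}_k^e(t)$ in place of $\boldsymbol\theta(t)$ and $\boldsymbol{X}_k(t)$: the bound $|1-\frac{1}{l'}\frac{d}{du}\mathcal{Z}'(u)|\le 1-\frac{d_0}{l'}$ on $\mathcal{B}(0,\frac{|c'|}{d_0})$, self-mapping, and Banach's theorem. Your extras (global uniqueness from monotonicity, the requirement that $\hat{\boldsymbol\theta}_0(t)$ lie in the projection ball, and flipping the sign of $l'$ in the decreasing case) tighten points the paper glosses over.

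Two small fixes. In the intermediate-value step the chain should read $\mathcal{Z}'(|c'|/d_0)>c'+|c'|=0$, not $\geq c'+|c'|>0$, and symmetrically for $c'>0$. The self-mapping step also needs no sign split, since $|\mathcal{T}'(u)|\le|1-\frac{g}{l'}|\,|u|+\frac{|c'|}{l'}\le(1-\frac{d_0}{l'})\frac{|c'|}{d_0}+\frac{|c'|}{l'}=\frac{|c'|}{d_0}$ holds directly, which is how the paper does it.
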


\emph{Proof:~} The proof is similar to that of Proposition \ref{pp1} and thus omitted for brevity.

Given the parameter estimation $\hat{\boldsymbol\theta}_k(t)$, the estimated state vector $\boldsymbol{X}_k^e(t)$, and the reference trajectory $r_k(t+\rho)$, the AILC input $u_k(t)$ is an implicit function given by
\begin{align}\label{ALCLaw}
    u_k(t)=\mathcal{G}(\boldsymbol{X}_k^e(t),\hat{\boldsymbol\theta}_k(t),r_k(t+\rho)).
\end{align}

The implicit AILC scheme, which includes the GDPA law \eqref{GDPA1}-\eqref{GDPA5}, state estimator  \eqref{estimateX1}-\eqref{estimateX3}, and control input \eqref{ALCLaw} is illustrated in Fig.\,\ref{fig:block}.

\begin{remark}\label{remark3}
It is worth pointing out that he GDPA law \eqref{GDPA1}-\eqref{GDPA5} can be simplified for specific scenarios. If the function $\boldsymbol{f}(\boldsymbol X,u)$ is bounded, we can set $m_k(t) \equiv 1$. In the disturbance-free case, we can set $a_k(t) \equiv 1$ and omit the estimation $\hat{w}_k(t)$, to improve the tracking performance (see Corollary \ref{corollarydisturbance} in Section \ref{BasicTracking} and Corollary \ref{corollarynodis+appro} in Section \ref{Iterativemethod}). If disturbance $w_k(t)$ has a known upper bound $w^+$ ($|w_k(t)|\leq w^+$), we can use $w^+$ in \eqref{GDPA3} and skip the estimation law \eqref{GDPA4}, while a large difference between $w^+$ and $w$ may affect the tracking accuracy.
\end{remark}

\begin{figure}[!t]
	\centering
	\includegraphics[width=0.50\textwidth]{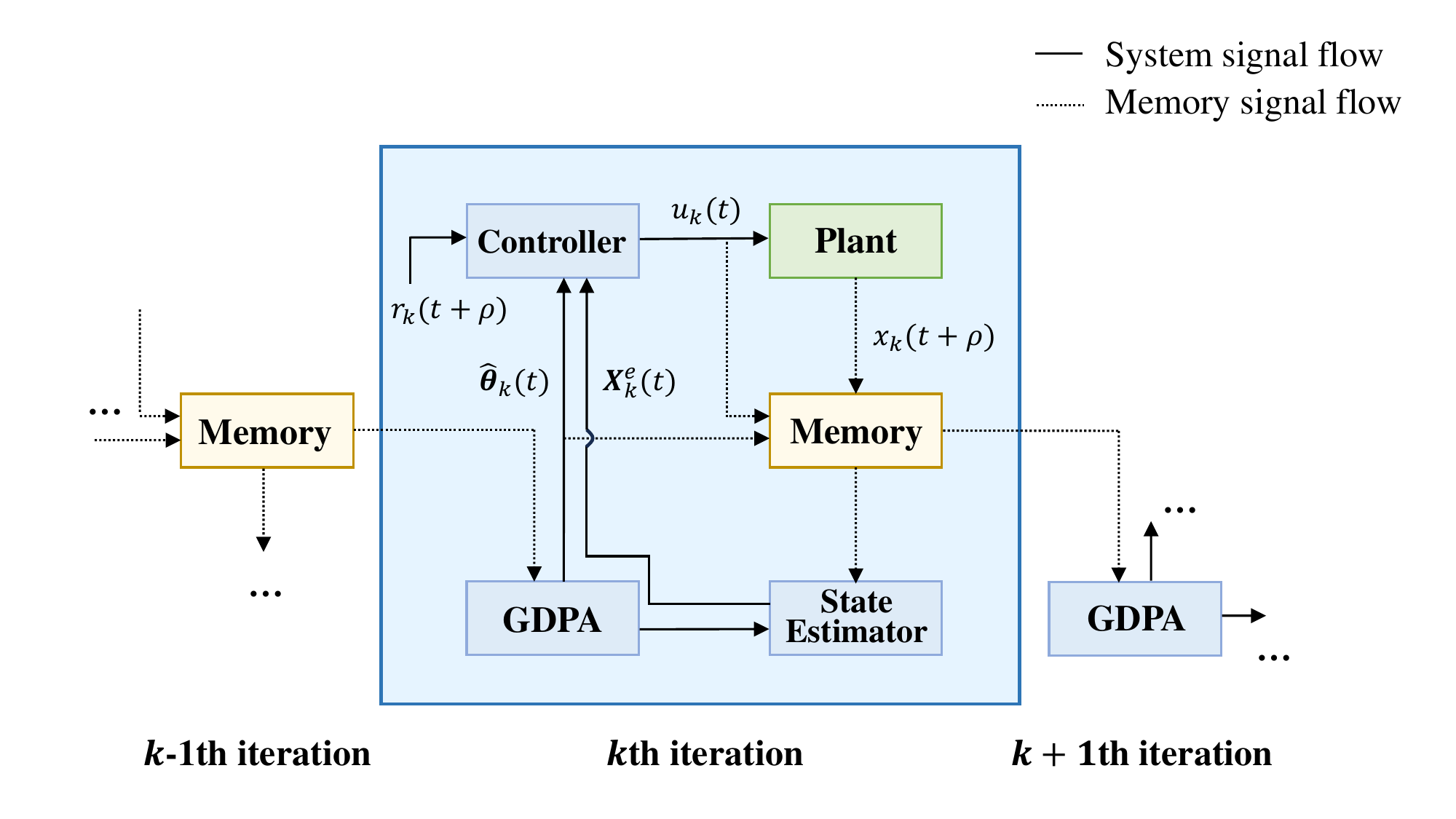}
	\caption{Block diagram of the AILC scheme.}
	\label{fig:block}
\end{figure}

\section{Convergence Analysis}\label{BasicTracking}

In this section, we will establish the convergence properties of the proposed AILC scheme \eqref{ALCLaw} with the GDPA law \eqref{GDPA1}-\eqref{GDPA5} and the state estimates \eqref{estimateX1}-\eqref{estimateX3}. Our result is stated in the following theorem:

\begin{theorem}\label{Th1}
Consider system \eqref{system} satisfying Assumptions \ref{assu1}-\ref{assu4}, under the AILC input \eqref{ALCLaw} with the GDPA law \eqref{GDPA1}-\eqref{GDPA5} and the state estimates \eqref{estimateX1}-\eqref{estimateX3}, with $0<\eta <2$, for all $t\in \{0,\cdots,T-\rho\}$, there exist constants $K^{i,t}\geq0,\ i=0,\cdots,t-1$, such that
\begin{equation*}
    \limsup_{k\to \infty}|e_k(t+\rho)|\leq w+\sqrt{2V_0(t)}+\sum_{i=0}^{t-1} K^{i,t}\left(w+\sqrt{2V_0(i)}\right),
\end{equation*}
where $V_0(t)=\frac{1}{2}\|\hat{\boldsymbol\theta}_0(t)-\boldsymbol\theta(t)\|^2+\frac{1}{2}(\hat{w}_{0} (t)-w)^2$.
\end{theorem}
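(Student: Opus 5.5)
We will prove the bound for each fixed $t$ in two parts. First we show that the prediction error $\epsilon_k$ is asymptotically confined to the dead-zone by a Lyapunov-type argument. Then we propagate errors along time using the Lipschitz structure and the state estimator.

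\textbf{Step 1 (parameter-error energy).} Fix $t$ and set $\tilde{\boldsymbol\theta}_k(t)=\hat{\boldsymbol\theta}_k(t)-\boldsymbol\theta(t)$, $\tilde w_k(t)=\hat w_k(t)-w$, and $V_k(t)=\frac12\|\tilde{\boldsymbol\theta}_k(t)\|^2+\frac12\tilde w_k(t)^2$. Because $\boldsymbol\theta(t)\in\mathcal{B}(\underline{\boldsymbol\theta},R)$ and the projection onto a convex set is nonexpansive, we have $\|\hat{\boldsymbol\theta}_{k+1}-\boldsymbol\theta\|\le\|\bar{\boldsymbol\theta}_{k+1}-\boldsymbol\theta\|$.

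Write $f_k=\boldsymbol f(\boldsymbol X_k(t),u_k(t))$. Then $\epsilon_k m_k^2=-\tilde{\boldsymbol\theta}_k^\top f_k+w_k(t)$. Expanding $V_{k+1}-V_k$ and using $\|f_k\|^2\le m_k^2$ together with $|w_k|\le w$, we aim for
\[
V_{k+1}(t)-V_k(t)\le -\eta(2-\eta)\,\frac{a_k(t)^2\epsilon_k(t+\rho)^2 m_k(t)^2}{2}\le 0 .
\]
This is the standard dead-zone computation. The cross term $a_k\epsilon_k(w_k-w\,\mathrm{sgn}\,\epsilon_k)$ is nonpositive, and the definition of $a_k$ gives $a_k|\epsilon_k|m_k^2=|\epsilon_k|m_k^2-\hat w_k$, which handles the $\tilde w$ terms. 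Consequently $V_k(t)\le V_0(t)$ for all $k$, and $\sum_k a_k^2\epsilon_k^2m_k^2<\infty$, so $a_k|\epsilon_k|m_k\to0$. In addition, $\hat w_k$ is nondecreasing and bounded by $w+\sqrt{2V_0}$, so it converges.

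\textbf{Step 2 (modeling error bound).} From Step 1 we get $\limsup_k |x_k(t+\rho)-\hat{\boldsymbol\theta}_k^\top f_k| \le \limsup_k \hat w_k(t) \le w+\sqrt{2V_0(t)}$. The main obstacle is that $a_k|\epsilon_k|m_k\to0$ only controls the excess $|\epsilon_k|m_k^2-\hat w_k$ after dividing by $m_k$. We therefore need $m_k$ bounded, i.e.\ boundedness of $\boldsymbol X_k,u_k$.

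We obtain this by induction on $t$. By Assumption~\ref{assu4}, the early states are bounded. The input solves \eqref{pred-model}, and $|\hat{\boldsymbol\theta}^\top\partial_u\boldsymbol f|>d_0$ gives $|u_k(t)|\le |c'|/d_0$ by Proposition~\ref{pp2}. Here $c'$ is bounded by the Lipschitz bounds (Assumption~\ref{assu3}) in terms of the bounded estimated states and $r_k$. Hence $u_k$ is bounded, and therefore $x_k(t+\rho)$ is bounded as well. If the induction becomes delicate, we will fall back on the paper's likely route: bound $|\epsilon_k|m_k^2\le \hat w_k+a_k|\epsilon_k|m_k^2$ and use boundedness of $m_k$.

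\textbf{Step 3 (error propagation).} Decompose
\[
e_k(t+\rho)=\big[x_k(t+\rho)-\hat{\boldsymbol\theta}_k(t)^\top f_k\big]+\hat{\boldsymbol\theta}_k(t)^\top\big[\boldsymbol f(\boldsymbol X_k,u_k)-\boldsymbol f(\boldsymbol X^e_k,u_k)\big],
\]
using \eqref{pred-model}. The first bracket is handled by Step 2. The second is bounded by $(\|\underline{\boldsymbol\theta}\|+R)L^{\boldsymbol X}\|\boldsymbol X_k(t)-\boldsymbol X^e_k(t)\|$.

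Each estimation error $x_k(t+j)-x^e_k(t+j|t)$ for $1\le j\le\rho-1$ has the same structure: the modeling error at time $t+j-\rho$ plus a Lipschitz term in earlier estimation errors. Unrolling this recursion bounds it by a nonnegative linear combination of the modeling errors $x_k(i+\rho)-\hat{\boldsymbol\theta}_k(i)^\top f_k(i)$ for $i<t$; indices $i<0$ vanish because the initial states are known. Taking $\limsup$ and applying Step 2 to each $i$ yields the claimed bound. The constants $K^{i,t}$ are products and sums of $(\|\underline{\boldsymbol\theta}\|+R)L^{\boldsymbol X}$ collected from this recursion.
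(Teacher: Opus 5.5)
Your proof is correct, but it departs from the paper's argument at two points.

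\textbf{Boundedness of $m_k$.} To pass from $a_k|\epsilon_k|m_k\to0$ to $a_k\varpi_k\to0$, the paper never shows that $m_k(t)$ is bounded. It proves only the linear-growth estimate $m_k(t)\le C+C\max_i a_k(i)|\varpi_k(i+\rho,\cdots,i)|$ and then invokes the key technical lemma of \cite{goodwin2009adaptive}. Your finite induction in $t$ closes without difficulty, so the fallback you mention is unnecessary. Its ingredients are:
\begin{itemize}
\item the known, bounded initial states;
\item $\|\hat{\boldsymbol\theta}_k(t)\|\le\|\underline{\boldsymbol\theta}\|+R$, from the projection;
\item $|u_k(t)|<|c'|/d_0$, with $|c'|$ bounded affinely in $\|\boldsymbol X^e_k(t)\|$;
\item global Lipschitz $\boldsymbol f$, bounded $w$, and bounded $r_k$ (the last is also used tacitly by the paper).
\end{itemize}
Over a finite horizon these give uniform-in-$k$ bounds on all closed-loop signals, which makes the key technical lemma superfluous. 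The paper's route would only matter where such a priori bounds are unavailable, for instance on an infinite horizon.

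\textbf{Error propagation.} The paper inserts $r_k$ and compares $x^e_k(\cdot|t)$ with $x^e_k(\cdot|t-1)$. This generates terms $|e_k(t+i)|$ and hence a recursion reaching back to time $0$. You instead write $x_k(t+j)-x^e_k(t+j|t)$ directly as the modeling error at time $t+j-\rho$ plus Lipschitz terms in earlier estimation errors. This is more transparent and actually sharper: it yields $K^{i,t}=0$ for $i<t-\rho+1$, so only the last $\rho-1$ modeling errors enter the bound.

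\textbf{A small fix in Step 1.} To obtain exactly $-\frac{\eta(2-\eta)}{2}a_k^2\epsilon_k^2m_k^2$, you need the identity $1+\|f_k\|^2=m_k(t)^2$, not merely $\|f_k\|^2\le m_k^2$. The extra $1$ is what absorbs the term $\frac12\eta^2a_k^2\epsilon_k^2$ produced by the $\hat w$ update. The inequality alone would not cover the full range $0<\eta<2$.
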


\emph{Proof:~} See Appendix B.

Theorem \ref{Th1} shows that the tracking error $e_k(t+\rho)$ converges to a closed ball around zero. On the one hand, the radius of the closed ball is related to $K^{i,t}$, which depends on the unknown constant $d_0$, Lipschitz constants $L^{\boldsymbol{X}}$, $L^u$, upper bound of the reference trajectory $\sup_{k,t}r_k(t+\rho)$, initial estimation $\hat{\boldsymbol{\theta}}_0(t)$, and radius $R$. These values cannot be improved no matter what controller is designed. Therefore, the upper bound in Theorem \ref{Th1} is tight. On the other hand, the radius depends on the disturbances from time $0$ to $t$. This is due to the relative degree $\rho$, which results in unmeasurable state variables $x_k(t+\rho-1)$,..., $x_k(t+1)$. Therefore, we use historical information prior to time $t$ to estimate the unmeasurable state variables (see \eqref{estimateX1}-\eqref{estimateX3}). For the special case of $\rho=1$, $u_k(t)$ can be obtained from the model $r_k(t+1)=\hat{\boldsymbol\theta}_k(t)^\top \boldsymbol{f}(x_k(t),u_k(t))$, eliminating the need for state estimations, leading to the following corollary:
\begin{corollary}\label{corollaryrho=1}
For $\rho=1$, under the conditions of Theorem \ref{Th1}, for each $t\in\{0,\cdots,T-1\}$, one has
\begin{align*}
   \limsup_{k\to \infty} |e_k(t+1)|\leq w+\sqrt{2V_0(t)}.
\end{align*}
\end{corollary}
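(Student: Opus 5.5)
For $\rho=1$ I would not go through the general bound of Theorem \ref{Th1}. I would redo the core Lyapunov argument directly, since the state estimator disappears. With $\rho=1$ we have $\boldsymbol{X}_k(t)=x_k(t)$, which is measured, so $\boldsymbol{X}_k^e(t)=\boldsymbol{X}_k(t)$. The control \eqref{ALCLaw}, which exists and is unique by Proposition \ref{pp2}, gives exactly $\hat{\boldsymbol\theta}_k(t)^\top \boldsymbol{f}(x_k(t),u_k(t))=r_k(t+1)$. Hence the tracking error equals the prediction error:
\[
e_k(t+1)=x_k(t+1)-\hat{\boldsymbol\theta}_k(t)^\top\boldsymbol{f}(x_k(t),u_k(t))=\epsilon_k(t+1)m_k(t)^2 .
\]
Because each $t$ is treated separately, the terms $K^{i,t}$ of Theorem \ref{Th1} (which come only from estimation errors at earlier times) never arise.

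Fix $t$, write $\boldsymbol{f}_k=\boldsymbol{f}(x_k(t),u_k(t))$, and set
\[
V_k(t)=\tfrac12\|\hat{\boldsymbol\theta}_k(t)-\boldsymbol\theta(t)\|^2+\tfrac12(\hat w_k(t)-w)^2 .
\]
The projection onto the convex ball $\mathcal{B}(\underline{\boldsymbol\theta},R)$, which contains $\boldsymbol\theta(t)$ by Assumption \ref{assu1}, is nonexpansive. So it suffices to bound the change computed with $\bar{\boldsymbol\theta}_{k+1}(t)$. Use $x_k(t+1)=\boldsymbol\theta(t)^\top\boldsymbol{f}_k+w_k(t)$, $\epsilon m^2=-\tilde{\boldsymbol\theta}^\top\boldsymbol{f}_k+w_k(t)$ with $\tilde{\boldsymbol\theta}=\hat{\boldsymbol\theta}_k(t)-\boldsymbol\theta(t)$, and $\|\boldsymbol{f}_k\|^2\le m_k^2$. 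Expanding then gives
\[
\Delta V_k\le -\eta a_k\epsilon^2 m^2+\eta a_k|\epsilon||w_k|+\tfrac{\eta^2}{2}a_k^2\epsilon^2+\eta a_k|\epsilon|(\hat w_k-w)+\tfrac{\eta^2}{2}a_k^2\epsilon^2 .
\]
The two $w$-terms combine to at most $\eta a_k|\epsilon|\hat w_k$. When $a_k>0$ the dead zone gives $|\epsilon|m^2-\hat w_k=a_k|\epsilon|m^2$, which yields
\[
\Delta V_k\le -\eta a_k^2\epsilon^2 m^2\Big(1-\frac{\eta}{m^2}\Big).
\]
This step, getting $a_k^2$ with a coefficient that is negative under $0<\eta<2$, is the one I expect to be the main obstacle. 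The crude bounds above give $m^2-\eta$, and since $m^2\ge1$ this may fail for $\eta\in(1,2)$. So I would redo the expansion carefully: the updates of $\hat{\boldsymbol\theta}$ and $\hat w$ each contribute $\tfrac{\eta^2}{2}a_k^2\epsilon^2\|\boldsymbol f_k\|^2/m^2$ and $\tfrac{\eta^2}{2}a_k^2\epsilon^2$ in normalized form, and the aim is a bound like $-\eta(1-\eta/2)a_k^2\epsilon^2 m^2$ (or a $m^2$-scaled version of $V$). In any case the target is $\Delta V_k\le -c\,a_k^2\epsilon_k^2m_k^2$ with $c>0$.

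From this, $V_k(t)$ is nonincreasing, so $V_k(t)\le V_0(t)$ and $\hat w_k(t)\le w+\sqrt{2V_0(t)}$ for all $k$. Also $\hat w_k$ is nondecreasing, so it converges to some $\hat w_\infty\le w+\sqrt{2V_0(t)}$. Summing the decrease of $V$ gives $\sum_k a_k^2\epsilon_k^2 m_k^2<\infty$, so $a_k|\epsilon_k|m_k\to0$. It remains to convert this into a bound on $|e_k(t+1)|=|\epsilon_k|m_k^2=\hat w_k+a_k|\epsilon_k|m_k^2$. This needs $m_k$ to stay bounded. Using Assumption \ref{assu3}, the bound $\|\boldsymbol f_k\|\le \|\boldsymbol f(0,0)\|+L^X|x_k(t)|+L^u|u_k(t)|$, and boundedness of $u_k(t)$ from the contraction ball of Proposition \ref{pp2} (radius $|c'|/d_0$, with $c'$ bounded by the reference and $x_k(t)$), one gets a linear-growth estimate $m_k\le \alpha+\beta|e_k(t+1)|$. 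A standard key-technical-lemma argument then shows $a_k|\epsilon_k|m_k^2\to0$. Hence $\limsup_k|e_k(t+1)|\le\hat w_\infty\le w+\sqrt{2V_0(t)}$.
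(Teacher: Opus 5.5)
Your overall route matches the paper's Appendix~C. For $\rho=1$ you correctly get $\boldsymbol X_k^e(t)=\boldsymbol X_k(t)$ and $e_k(t+1)=\varpi_k(t+1,t)=\epsilon_k(t+1)m_k(t)^2$, and then you run the Lyapunov/dead-zone argument of Appendix~B. However, two steps fail as written.

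First, the decrease of $V_k(t)$, which you leave open, rests on a term you mis-expanded. The $\hat{\boldsymbol\theta}$-update contributes $\frac{\eta^2}{2}a_k^2\epsilon_k^2\|\boldsymbol f_k\|^2$, not $\frac{\eta^2}{2}a_k^2\epsilon_k^2$ as you wrote. The crude bound $\|\boldsymbol f_k\|^2\le m_k^2$ would indeed only give $\eta<1$. The missing observation is that the normalization is built so that $1+\|\boldsymbol f_k\|^2=m_k^2$ \emph{exactly}. The two quadratic terms from the $\hat{\boldsymbol\theta}$- and $\hat w$-updates therefore sum to $\frac{\eta^2}{2}a_k^2\epsilon_k^2m_k^2$. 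Combined with your (correct) dead-zone identity $a_k(\epsilon_k^2m_k^2-|\epsilon_k|\hat w_k)=a_k^2\epsilon_k^2m_k^2$, this gives $\Delta V_k\le-(\eta-\frac{\eta^2}{2})a_k^2\epsilon_k^2m_k^2$ for every $0<\eta<2$. That is exactly the target you wrote down and the paper's \eqref{decrease}, with no rescaling of $V$ needed.

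Second, and more seriously, your linear-growth estimate has the wrong time index. $m_k(t)$ is a function of $x_k(t)$ and $u_k(t)$, and your bound on $u_k(t)$ via $|c'|/d_0$ is also in terms of $x_k(t)$. So what Assumption~\ref{assu3} actually yields is $m_k(t)\le\alpha+\beta|x_k(t)|$, i.e.\ a bound by $e_k(t)$, the error at the \emph{previous} step (for $t\ge1$), not by $e_k(t+1)$. Nothing makes $\|\boldsymbol f_k\|$ controlled by $|\boldsymbol\theta(t)^\top\boldsymbol f_k|$. With $t$ fixed in isolation, $a_k(t)|\epsilon_k(t+1)|m_k(t)\to0$ says nothing about $a_k(t)|\epsilon_k(t+1)|m_k(t)^2$ if $e_k(t)$ were unbounded, so the key-technical-lemma step does not close. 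You have to couple the time steps.

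The simplest fix is induction on $t$. Since $x_k(0)$ is bounded by Assumption~\ref{assu4}, $m_k(0)$ is bounded and $a_k(0)|\epsilon_k(1)|m_k(0)^2\to0$. Then $|e_k(1)|\le\hat w_k(0)+a_k(0)|\epsilon_k(1)|m_k(0)^2$ is bounded in $k$, hence so is $m_k(1)$, and so on up to $t=T-1$. This even dispenses with the key technical lemma. The paper instead bounds every $m_k(t)$ by $C+C\max_i a_k(i)|\varpi_k(i+1,i)|$ and applies the lemma to that maximum, as in \eqref{lineargrowth}--\eqref{limitKTL}. Your remark that the $K^{i,t}$ terms disappear is right for the final bound, since they come from state-estimation errors. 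The boundedness argument, however, still propagates through earlier times.
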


\emph{Proof:~} See Appendix C.

In the disturbance-free case, \textit{i.e.,} $w_k(t)\equiv0$, we set the dead-zone function $a_k(t)$ to one, which simplifies the parametric adaptation law \eqref{GDPA2} to:
\begin{align}\label{GDPAmodify}
    \Bar{\boldsymbol\theta}_{k+1}(t)=\hat{\boldsymbol\theta}_{k}(t)+\eta \epsilon_k(t+\rho)\boldsymbol{f}(\boldsymbol{X}_k(t),u_k(t)),
\end{align}
leading to a perfect tracking as stated in the following corollary:

\begin{corollary}\label{corollarydisturbance}
 Consider system \eqref{system}  with $w_k(t)\equiv0$, satisfying Assumptions \ref{assu1}, \ref{assu3}, \ref{assu4}, under the AILC input \eqref{ALCLaw} with the GDPA law \eqref{GDPA1}, \eqref{GDPAmodify}, \eqref{GDPA5}, and
the state estimates \eqref{estimateX1}-\eqref{estimateX3}, with $0<\eta<2$, then a perfect tracking is achieved, \textit{i.e.,}
\begin{align*}
    \lim_{k\to \infty} |e_k(t+\rho)|=0,
\end{align*}
for all $t\in \{0,\cdots,T-\rho\}$.
\end{corollary}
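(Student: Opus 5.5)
We argue by strong induction on $t\in\{0,\cdots,T-\rho\}$, using a Lyapunov-type argument in the iteration domain at each fixed time instant. The plan has three parts: a per-instant Lyapunov estimate showing $\epsilon_k(t+\rho)\to0$, a bound turning the state-estimation mismatch into a Lipschitz combination of earlier prediction errors, and an induction that transfers convergence from earlier instants to the tracking error.

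\textbf{Step 1 (Lyapunov in $k$ at fixed $t$).} Let $\tilde{\boldsymbol\theta}_k(t)=\hat{\boldsymbol\theta}_k(t)-\boldsymbol\theta(t)$ and $V_k(t)=\frac12\|\tilde{\boldsymbol\theta}_k(t)\|^2$. With $w_k\equiv0$, \eqref{GDPA1} gives
$$\epsilon_k(t+\rho)m_k(t)^2=-\tilde{\boldsymbol\theta}_k(t)^\top \boldsymbol f(\boldsymbol X_k(t),u_k(t)).$$
Since $\boldsymbol\theta(t)\in\mathcal{B}(\underline{\boldsymbol\theta},R)$ by Assumption \ref{assu1}, projection onto this convex ball is nonexpansive toward $\boldsymbol\theta(t)$. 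Using \eqref{GDPAmodify} and $\|\boldsymbol f\|^2\le m_k^2$, we obtain
$$V_{k+1}(t)-V_k(t)\le-\eta\Big(1-\frac{\eta}{2}\Big)\epsilon_k(t+\rho)^2m_k(t)^2.$$
Because $0<\eta<2$, $V_k(t)$ is nonincreasing, and summing over $k$ gives $\sum_k \epsilon_k(t+\rho)^2m_k(t)^2<\infty$. Hence $\epsilon_k(t+\rho)m_k(t)\to0$.

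\textbf{Step 2 (estimation-error bound).} Write the true output as $x_k(t+\rho)=\boldsymbol\theta(t)^\top\boldsymbol f(\boldsymbol X_k(t),u_k(t))$. By \eqref{pred-model}, $r_k(t+\rho)=\hat{\boldsymbol\theta}_k(t)^\top\boldsymbol f(\boldsymbol X^e_k(t),u_k(t))$. Adding and subtracting $\hat{\boldsymbol\theta}_k(t)^\top\boldsymbol f(\boldsymbol X_k(t),u_k(t))$ gives
$$e_k(t+\rho)=\epsilon_k(t+\rho)m_k(t)^2+\hat{\boldsymbol\theta}_k(t)^\top\big(\boldsymbol f(\boldsymbol X_k(t),u_k(t))-\boldsymbol f(\boldsymbol X^e_k(t),u_k(t))\big).$$
The second term is bounded by $(\|\underline{\boldsymbol\theta}\|+R)L^{\boldsymbol X}\|\boldsymbol X_k(t)-\boldsymbol X^e_k(t)\|$. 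The estimator \eqref{estimateX1}--\eqref{estimateX3} is built from the same model at times $t-1,\dots,t-\rho+1$. Subtracting the true dynamics recursively, $|x_k(t+j)-x^e_k(t+j|t)|$ is bounded by a Lipschitz combination of $|\epsilon_k(i+\rho)|m_k(i)^2$ for $i<t$, plus earlier estimation differences. By induction on $j$ this yields
$$\|\boldsymbol X_k(t)-\boldsymbol X^e_k(t)\|\le\sum_{i<t}C^{i,t}|\epsilon_k(i+\rho)|m_k(i)^2.$$
For $t\le\rho-1$ the relevant states are known by Assumption \ref{assu4}, so the sum is empty or trivial.

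\textbf{Step 3 (the obstacle: the normalization factor).} Step 1 controls only $\epsilon_k m_k$, whereas $e_k$ involves $\epsilon_k m_k^2$. We therefore must show that $m_k(t)$ stays bounded in $k$. By Assumption \ref{assu3}, $m_k(t)\le C(1+\|\boldsymbol X_k(t)\|+|u_k(t)|)$. The input bound follows from Proposition \ref{pp2}: $u_k(t)\in\mathcal B(0,|c'|/d_0)$ with $|c'|\le|\hat{\boldsymbol\theta}_k^\top\boldsymbol f(\boldsymbol X^e_k(t),0)|+|r_k|$, which is linear in $\|\boldsymbol X^e_k(t)\|$ because $\hat{\boldsymbol\theta}_k$ lies in the ball and $\boldsymbol f$ is Lipschitz. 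The states entering $\boldsymbol X_k(t)$ and $\boldsymbol X^e_k(t)$ are $x_k(t+j)=r_k(t+j)+e_k(t+j)$ for $j<\rho$, i.e.\ tracking errors at instants $t+j-\rho<t$, plus the initial values. We then use a strong induction hypothesis at all $i<t$: $e_k(i+\rho)\to0$ and $\sup_k m_k(i)<\infty$. Given a bounded reference, this makes $\boldsymbol X_k(t)$, $\boldsymbol X^e_k(t)$, $u_k(t)$, and hence $m_k(t)$, uniformly bounded in $k$.

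\textbf{Conclusion.} With $m_k(t)$ bounded, Step 1 gives $\epsilon_k(t+\rho)m_k(t)^2\to0$. The induction hypothesis with Step 2 gives $\|\boldsymbol X_k(t)-\boldsymbol X^e_k(t)\|\to0$. Therefore $e_k(t+\rho)\to0$, which closes the induction. The base case $t=0$ uses only known initial states, so $\boldsymbol X^e_k(0)=\boldsymbol X_k(0)$ is bounded and the same argument applies directly.
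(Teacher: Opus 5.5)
Your proof is correct, but it reaches the key step by a different route than the paper. The paper proves this corollary by reusing Appendix B with $a_k(t)\equiv 1$. The Lyapunov summability gives $\varpi_k(t+\rho,\cdots,t)^2/m_k(t)^2\to 0$, and Lemma~\ref{lemmabound} shows only that $m_k(t)$ is \emph{linearly bounded} by the earlier prediction errors $\varpi_k(i+\rho,\cdots,i)$, $i<t$. The Goodwin--Sin key technical lemma, applied to the maximum over all time instants, then yields $\varpi_k\to 0$ at every $t$ simultaneously, without first proving that $m_k(t)$ is bounded; \eqref{evarpi} finishes the argument.

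You instead exploit the finite horizon and the causal, lower-triangular dependence of $m_k(t)$ on earlier instants. Strong induction on $t$ gives $\sup_k m_k(t)<\infty$ outright, so $\varpi_k=(\epsilon_k m_k)\,m_k\to 0$ is immediate and no key technical lemma is needed. Your Step 2 is also more direct than the paper's chain \eqref{e(t+rho)zhon}--\eqref{evarpi}. Subtracting the estimator from the true dynamics shows that the estimation error at time $t$ is a Lipschitz combination of prediction errors at only the $\rho-1$ preceding instants, rather than at all $i<t$.

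The trade-off is this. Your argument is elementary and self-contained, and it should adapt to the disturbed case of Theorem~\ref{Th1}, since a finite $\limsup$ at earlier instants already gives boundedness in $k$. The paper's argument handles all instants at once with the standard adaptive-control template and does not need an induction over time.

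Two small bookkeeping remarks:
\begin{itemize}
\item In Step 3, the cleanest way to bound $\boldsymbol X^e_k(t)$ using exactly your stated hypothesis is $\|\boldsymbol X^e_k(t)\|\le\|\boldsymbol X_k(t)\|+\|\boldsymbol X_k(t)-\boldsymbol X^e_k(t)\|$ together with Step 2. Each term $|\epsilon_k(i+\rho)|m_k(i)^2$ is bounded because $\epsilon_k(i+\rho)m_k(i)\to0$ and $m_k(i)$ is bounded. Otherwise you must carry boundedness of $u_k(i)$, $i<t$, in the induction hypothesis, since the estimator uses $u_k(t-1),\cdots,u_k(t-\rho+1)$.
\item Your parenthetical claim that all relevant states are known for $t\le\rho-1$ holds only for $t=0$. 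For $1\le t\le\rho-1$ some entries are estimated; for example, $x_k(\rho)$ enters $\boldsymbol X_k(1)$ through $x^e_k(\rho|1)$. Your general Step 2 bound over $0\le i<t$ covers these cases anyway.
\end{itemize}
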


\emph{Proof:~} See Appendix D.

\section{Explicit AILC Design}\label{Iterativemethod}

In this section, we provide an explicit and implementable solution for the AILC input \eqref{ALCLaw}. To this end, we propose an iterative computation method that construct a sequence $\{\mathrm{u}_k^{p}(t)\}_{p=0}^{\infty}$, at time $t$ during the $k$-th iteration, as follows:
\begin{align}\label{iterativesolution}
   \mathrm{u}_k^{p}(t)=\left\{ \begin{array}{l}
	\mathcal{T}'(\mathrm{u}_k^{p-1}(t)),\ \ p>0,   \\
         0, \quad  \qquad \qquad \ p=0,
	\end{array}\right.
\end{align}
where the mapping $\mathcal{T}'$ is defined in Proposition \ref{pp2}. One can show that the sequence $\{\mathrm{u}_k^{p}(t)\}_{p=0}^{\infty}$ is convergent due to the fact that $\mathcal{T}'$ is a contraction mapping on $\mathcal{B}(0,\frac{|c'|}{d_0})$.
Letting $p\to \infty$ on both sides of \eqref{iterativesolution}, one obtains
\begin{align*}
\lim_{p\to \infty}\mathrm{u}_k^{p}(t)=\mathcal{T}'\big(\lim_{p\to \infty} \mathrm{u}_k^{p}(t)\big),
\end{align*}
and by the uniqueness of fixed point, one has
\begin{align*}
	\lim_{p\to \infty}\mathrm{u}_k^{p}(t)=u_k(t),
\end{align*}
where, $u_k(t)$ is the implicit control input given by \eqref{ALCLaw}. Note that $\mathrm{u}_k^{p}(t)$ is a numerical approximation of $u_k(t)$ which is accurate at sufficiently large $p$. For practical implementations, one must stop the iterative process $\mathrm{u}_k^{p}(t)=\mathcal{T}'(\mathrm{u}_k^{p-1}(t))$ after a finite number of steps, leading to an approximation error between $\mathrm{u}_k^{p}(t)$ and $u_k(t)$. A practical stopping criterion for this iterative process is provided as follows:

\emph{Iteration Stopping Criterion}: For any given $\varepsilon>0$, the iterative process $\mathrm{u}^{p}_k(t)=\mathcal{T}'(\mathrm{u}_k^{p-1}(t))$ stops after $p_o(k,t)$ iterations, where
\begin{align}\label{p0}
   p_o(k,t)=\hspace{-1mm} \begin{cases}
\big\lfloor \log_{(1-\frac{d_0}{l'})}\frac{\varepsilon d_0}{l'|\mathrm{u}_k^{1}(t)-\mathrm{u}_k^0(t)|} \big\rfloor+1,\ \text{if}\:  \mathrm{u}_k^{1}(t) \neq \mathrm{u}_k^{0}(t), \vspace{1mm}\\
 1, \quad \text{if}\   \mathrm{u}_k^{1}(t)=\mathrm{u}_k^{0}(t).
    \end{cases}
\end{align}
The operator $\lfloor\cdot\rfloor$ denotes the floor function, which rounds a number down to the nearest integer.

Lemma \ref{lemma1} provides the approximation error between the iterative approximation $\mathrm{u}^{p}_k(t)$ and the AILC input $u_k(t)$ under the above iteration stopping criterion.
\begin{lemma} \label{lemma1}
 For any given $\varepsilon>0$, the iterative process \eqref{iterativesolution} guarantees that $|\mathrm{u}^{p_o(k,t)}_k(t)-u_k(t)|<\varepsilon$, where $p_o(k,t)$ is given by the stopping criterion \eqref{p0}.
\end{lemma}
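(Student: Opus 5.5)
The plan is the standard a priori error estimate for Banach fixed-point iterations, applied to $\mathcal{T}'$ from Proposition \ref{pp2}.

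\textbf{Step 1: a Lipschitz constant for $\mathcal{T}'$.} Work on $\mathcal{B}(0,\frac{|c'|}{d_0})$, where Proposition \ref{pp2} makes $\mathcal{T}'$ a contraction with fixed point $u_k(t)$. We need an explicit constant. Since $\hat{\boldsymbol\theta}_k(t)\in\mathcal{B}(\underline{\boldsymbol\theta},R)$ by the projection \eqref{GDPA5}, Assumption \ref{assu1} applies. Taking the positive-sign case as in Section \ref{Trackingability}, we get $d_0<\partial_u\mathcal{Z}'(u)<l'$ on the ball, so
\begin{align*}
0<\frac{d}{du}\mathcal{T}'(u)=1-\frac{1}{l'}\partial_u\mathcal{Z}'(u)<1-\frac{d_0}{l'}=:q<1 .
\end{align*}
By the mean value theorem, $|\mathcal{T}'(u)-\mathcal{T}'(v)|\le q|u-v|$ for $u,v$ in the ball.

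\textbf{Step 2: the iterates stay in the ball.} I would take this invariance from the proof of Proposition \ref{pp2} (Appendix A analogue). The sequence starts at $\mathrm{u}^0_k(t)=0$, the centre. To check it directly, write $\mathcal{T}'(u)=\mathcal{T}'(u)-\mathcal{T}'(0)+\mathcal{T}'(0)$. Then $|\mathcal{T}'(u)|\le q\frac{|c'|}{d_0}+\frac{|c'|}{l'}=\frac{|c'|}{d_0}$, so the ball is mapped into itself. \textbf{This is the step I expect to be the main obstacle}: the estimate of Step 1 must be valid along all iterates.

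\textbf{Step 3: a priori bound.} By the triangle inequality and a geometric series,
\begin{align*}
|\mathrm{u}^p_k(t)-u_k(t)|\le\sum_{j\ge p}|\mathrm{u}^{j+1}_k(t)-\mathrm{u}^j_k(t)|\le\frac{q^p}{1-q}|\mathrm{u}^1_k(t)-\mathrm{u}^0_k(t)|=\frac{l'q^p}{d_0}|\mathrm{u}^1_k(t)-\mathrm{u}^0_k(t)|.
\end{align*}

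\textbf{Step 4: check the stopping rule \eqref{p0}.} There are two cases.
\begin{itemize}
\item If $\mathrm{u}^1_k(t)=\mathrm{u}^0_k(t)$, then $0$ is the fixed point. All iterates equal $u_k(t)$, so the error is $0<\varepsilon$.
\item Otherwise, set $s=\log_q\frac{\varepsilon d_0}{l'|\mathrm{u}^1_k(t)-\mathrm{u}^0_k(t)|}$. Then $p_o=\lfloor s\rfloor+1>s$. Since $0<q<1$, the map $x\mapsto q^x$ is strictly decreasing, so $q^{p_o}<q^{s}=\frac{\varepsilon d_0}{l'|\mathrm{u}^1_k(t)-\mathrm{u}^0_k(t)|}$. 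Inserting this into Step 3 gives $|\mathrm{u}^{p_o}_k(t)-u_k(t)|<\varepsilon$.
\end{itemize}

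Two technicalities need handling. If $s<0$, then $p_o$ may be $\le 0$; I would read \eqref{p0} with $p_o\ge1$, and the bound still holds because $q^{p}$ decreases in $p$. If $q=0$ were possible, the logarithm would be undefined, but the strict inequality $l'>\sup|\partial_u\mathcal{Z}'|>d_0$ ensures $0<q<1$.
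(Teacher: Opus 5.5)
Your proposal is correct and follows essentially the same route as the paper's proof in Appendix E. Both use the contraction constant $1-\frac{d_0}{l'}$, the telescoping geometric-series bound $|u_k(t)-\mathrm{u}_k^p(t)|\le \frac{l'}{d_0}\bigl(1-\frac{d_0}{l'}\bigr)^p|\mathrm{u}_k^1(t)-\mathrm{u}_k^0(t)|$, and the same two-case check of the stopping rule \eqref{p0}. You additionally verify ball invariance directly (the paper takes it from Proposition \ref{pp2}) and spell out the floor/logarithm step, including the edge case $p_o(k,t)\le 0$; these are details the paper leaves implicit.
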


\emph{Proof:~}See Appendix E.
\begin{remark}\label{remark4}
Note that $p_o(k,t)$ is an upper bound of the smallest number of iterations such that $|\mathrm{u}^{p}_k(t)-u_k(t)|<\varepsilon$. Lower values of $\varepsilon$ result in higher values of $p_o(k,t)$ since $\big\lfloor \log_{(1-\frac{d_0}{l'})}\frac{\varepsilon d_0}{l'|\mathrm{u}_k^{1}(t)-\mathrm{u}_k^0(t)|} \big\rfloor$ is a decreasing function of $\varepsilon$. If we consider an addition, a multiplication, an assignment and a comparison of numbers to be one basic operation, and assume $\mathrm{u}_k^{1}(t)-\mathrm{u}_k^0(t)$ is a constant, in the case of $\varepsilon=\frac{1}{n}$, the number of basic operations required by the AILC scheme with numerical approximation \eqref{GDPA1}-\eqref{GDPA5}, \eqref{estimateX1}-\eqref{estimateX3}, \eqref{ALCLaw}, \eqref{iterativesolution} in one iteration ($k$) is $O\Big((\ln n+p\rho)\Tilde{T}\Big)$, where $\Tilde{T}=T-\rho+1$ is the tracking time length. Therefore, as the threshold $\frac{1}{n}$ decreases and the parameter dimension $p$, relative degree $\rho$, tracking time length $\Tilde{T}$ increase, the required computational effort increases.
\end{remark}

Under the AILC input \eqref{ALCLaw} with the GDPA law \eqref{GDPA1}-\eqref{GDPA5}, state estimates \eqref{estimateX1}-\eqref{estimateX3}, and iterative approximation \eqref{iterativesolution}, one can state the following theorem:

\begin{theorem}\label{Th2}
 Consider system \eqref{system} satisfying Assumptions \ref{assu1}-\ref{assu4}, under the AILC input \eqref{ALCLaw} with the GDPA law \eqref{GDPA1}-\eqref{GDPA5}, state estimates \eqref{estimateX1}-\eqref{estimateX3}, and iterative approximation \eqref{iterativesolution}. For any given $\varepsilon>0$, if $p>p_o(k,t)$ and $0<\eta <2$, then there exist constants $K_\ast^{i,t},\ t=0,\cdots,T-\rho,\ i=0,\cdots,t-1$, such that
\begin{align*}
& \limsup_{k\to \infty} |e_k(t+\rho)| \leq w+\sqrt{2V_0(t)}\\
&+\sum_{i=0}^{t-1} K_\ast^{i,t}(w+\sqrt{2V_0(i)})+L^u \left(\sqrt{2V_0(t)}+\|\boldsymbol\theta(t)\|\right) \varepsilon.\\
\end{align*}
\end{theorem}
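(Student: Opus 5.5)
The plan is to rerun the argument behind Theorem \ref{Th1}, now with the applied input $\mathrm{u}_k(t):=\mathrm{u}_k^{p}(t)$ instead of the exact implicit input $u_k(t)$ from \eqref{ALCLaw}. The adaptation law \eqref{GDPA1}--\eqref{GDPA5} and the estimator \eqref{estimateX1}--\eqref{estimateX3} are driven by the input actually applied. So the Lyapunov analysis of the GDPA law in Appendix B carries over unchanged: it never uses how $u_k(t)$ was chosen, only the plant equation, Assumption \ref{assu2}, $0<\eta<2$, and the projection property. That analysis gives $V_k(t)=\frac12\|\hat{\boldsymbol\theta}_k(t)-\boldsymbol\theta(t)\|^2+\frac12(\hat w_k(t)-w)^2\le V_0(t)$. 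It also shows that the dead-zone output satisfies
$$\limsup_{k\to\infty}\frac{|x_k(t+\rho)-\hat{\boldsymbol\theta}_k(t)^\top\boldsymbol f(\boldsymbol X_k(t),\mathrm{u}_k(t))|}{m_k(t)}\ \text{``}\le\text{''}\ w+\sqrt{2V_0(t)},$$
in the same sense as in Appendix B. In particular, $\hat{\boldsymbol\theta}_k(t)$ stays in $\mathcal B(\underline{\boldsymbol\theta},R)$ and the normalized prediction error is asymptotically bounded.

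Next I decompose the tracking error:
\begin{align*}
e_k(t+\rho)&=\big[x_k(t+\rho)-\hat{\boldsymbol\theta}_k(t)^\top\boldsymbol f(\boldsymbol X_k(t),\mathrm{u}_k(t))\big]\\
&\quad+\hat{\boldsymbol\theta}_k(t)^\top\big[\boldsymbol f(\boldsymbol X_k(t),\mathrm{u}_k(t))-\boldsymbol f(\boldsymbol X^e_k(t),\mathrm{u}_k(t))\big]\\
&\quad+\hat{\boldsymbol\theta}_k(t)^\top\big[\boldsymbol f(\boldsymbol X^e_k(t),\mathrm{u}_k(t))-\boldsymbol f(\boldsymbol X^e_k(t),u_k(t))\big].
\end{align*}
I have used \eqref{pred-model}, which gives $\hat{\boldsymbol\theta}_k(t)^\top\boldsymbol f(\boldsymbol X^e_k(t),u_k(t))=r_k(t+\rho)$. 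The three terms are handled as follows.

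\emph{First term.} This is the modeling error, treated exactly as in Theorem \ref{Th1}, giving the contribution $w+\sqrt{2V_0(t)}$.

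\emph{Third term.} By Assumption \ref{assu3} and Lemma \ref{lemma1} (the stopping criterion ensures $|\mathrm{u}_k^p(t)-u_k(t)|<\varepsilon$ for $p\ge p_o(k,t)$, since the contraction errors are monotone in $p$), this term is at most $L^u\|\hat{\boldsymbol\theta}_k(t)\|\varepsilon$. Since $\|\hat{\boldsymbol\theta}_k(t)\|\le\|\boldsymbol\theta(t)\|+\|\hat{\boldsymbol\theta}_k(t)-\boldsymbol\theta(t)\|\le\|\boldsymbol\theta(t)\|+\sqrt{2V_0(t)}$, this yields exactly the extra term $L^u(\sqrt{2V_0(t)}+\|\boldsymbol\theta(t)\|)\varepsilon$ in the statement.

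\emph{Second term.} This is bounded by $\|\hat{\boldsymbol\theta}_k(t)\|L^{\boldsymbol X}\|\boldsymbol X_k(t)-\boldsymbol X^e_k(t)\|$. The state-estimation error is handled by induction on $t$. Each component $x_k(t+j)-x^e_k(t+j|t)$ is the difference between the true system and the estimator model \eqref{estimateX1} at an earlier time $i=t+j-\rho<t$. It therefore splits into a modeling error at time $i$, bounded asymptotically by $w+\sqrt{2V_0(i)}$ up to the normalization, plus Lipschitz-propagated earlier estimation errors. Unrolling the recursion gives a sum $\sum_{i<t}K_\ast^{i,t}(w+\sqrt{2V_0(i)})$, where the constants $K_\ast^{i,t}$ are built from $L^{\boldsymbol X}$, $R$ and $\|\underline{\boldsymbol\theta}\|$.

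The main obstacle is the normalization $m_k(t)$. The GDPA law only controls the prediction error divided by $m_k(t)$. To convert this into an absolute bound, I need $m_k(t)$, that is $\|\boldsymbol f(\boldsymbol X_k(t),\mathrm{u}_k(t))\|$, to be uniformly bounded in $k$. In Theorem \ref{Th1} this follows from a bound on $u_k(t)$: by Proposition \ref{pp2}, $|u_k(t)|\le|c'|/d_0$, and $c'$ is bounded via the Lipschitz growth of $\boldsymbol f$, bounded $\hat{\boldsymbol\theta}_k$, bounded $r_k$, and an inductive bound on the states. I will repeat this argument with $|\mathrm{u}_k(t)|\le|c'|/d_0+\varepsilon$, so that boundedness persists with constants perturbed by $\varepsilon$. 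This perturbation is precisely why the constants become $K_\ast^{i,t}$ rather than $K^{i,t}$. The $\varepsilon$-dependence of the states at earlier times feeds only into these constants and into the third term as above; the earlier times' own $\varepsilon$-terms would be absorbed into $K_\ast^{i,t}$ or handled by the same induction. Summing the three bounds and taking $\limsup_{k\to\infty}$ then gives the claim.
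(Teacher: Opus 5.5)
Your proof is correct, but it follows a different route from the paper's.

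\textbf{The paper's route.} The paper never splits $e_k(t+\rho)$ by hand. It defines the virtual reference $r_k^\ast(t+\rho)=\hat{\boldsymbol\theta}_k(t)^\top\boldsymbol f(\boldsymbol X_k^e(t),\mathrm u_k^p(t))$, for which the applied input is an \emph{exact} solution of the predictive equation. It then reruns Appendix B with $r_k^\ast$ in place of $r_k$: the recursion of Lemma \ref{lemmabound} through $r_k^\ast-x_k^e$, linear boundedness of $m_k(t)$ by earlier errors, and the key technical lemma. This gives $\limsup_{k}|x_k(t+\rho)-r_k^\ast(t+\rho)|\le\lim_k\hat w_k(t)+\sum_{i<t}K_\ast^{i,t}\lim_k\hat w_k(i)$. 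Finally it adds $|r_k(t+\rho)-r_k^\ast(t+\rho)|\le L^u\|\hat{\boldsymbol\theta}_k(t)\|\varepsilon$.

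\textbf{Your route.} You bound the estimator error directly by the earlier modeling errors $\varpi_k(i+\rho,\cdots,i)$, $t-\rho<i<t$. This is legitimate, since plant and estimator are driven by the same applied inputs. You also replace the key technical lemma by uniform-in-$k$ boundedness of $m_k(t)$, obtained by forward induction in $t$ over the finite horizon. At each step, $\boldsymbol X_k^e(t)$, hence $c'$, hence $|\mathrm u_k(t)|\le|c'|/d_0$, hence $x_k(t+\rho)$, are bounded by already-bounded quantities. This is not what the paper's proof of Theorem \ref{Th1} does; it only shows $m_k(t)$ is linearly bounded by earlier errors. Your induction is nonetheless valid.

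\textbf{Comparison.} The paper's reduction is shorter once Theorem \ref{Th1} is available. Yours is more elementary and self-contained, and it shows transparently why $\varepsilon$ enters only at the final time. It also yields sharper constants: your $K_\ast^{i,t}$ depend only on $L^{\boldsymbol X}$, $\|\underline{\boldsymbol\theta}\|+R$ and $\rho$, and vanish for $i\le t-\rho$.

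Two points to tighten.

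\textbf{The normalized-to-absolute step.} This passage must go through the dead zone. Lemma \ref{lemmaestimate} gives $a_k(t)|\varpi_k(t+\rho,\cdots,t)|/m_k(t)\to0$. Boundedness of $m_k(t)$ upgrades this to $a_k(t)\varpi_k(t+\rho,\cdots,t)\to0$. Then $|\varpi_k|\le a_k|\varpi_k|+\hat w_k(t)$ gives $\limsup_k|\varpi_k(t+\rho,\cdots,t)|\le w+\sqrt{2V_0(t)}$. Multiplying your displayed normalized bound by $\sup_k m_k(t)$ instead would inflate the constant and not give the stated bound.

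\textbf{Your last paragraph.} The hedging there is unnecessary and partly wrong. In your decomposition the $K_\ast^{i,t}$ do not depend on $\varepsilon$, and no earlier-time $\varepsilon$-terms arise, because the estimator error involves neither $r_k$ nor \eqref{pred-model}. Had such terms appeared, they could not be absorbed into $K_\ast^{i,t}(w+\sqrt{2V_0(i)})$, which vanishes when $w=0$ and $V_0(i)=0$.
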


\emph{Proof:~} See Appendix F.

As stated in Theorem \ref{Th2}, the tracking error bound depends on two sources of uncertainty. The first source is the system disturbances before time $t$, represented by $\sum_{i=0}^{t-1} K_\ast^{i,t}(w+\sqrt{2V_0(i)})$. The second source is the approximation error of the AILC input, given by $L^u \left(\sqrt{2V_0(t)}+\|\boldsymbol\theta(t)\|\right)\varepsilon$. To improve tracking accuracy, one has to reduce $\varepsilon$ by increasing the number of iterations in the iterative process \eqref{iterativesolution}. Similar to Corollary \ref{corollarydisturbance}, in the disturbance-free case, the tracking performance can be improved by using the AILC scheme \eqref{ALCLaw} with the modified GDPA law \eqref{GDPA1}, \eqref{GDPAmodify}, \eqref{GDPA5}, state estimates \eqref{estimateX1}-\eqref{estimateX3}, and iterative approximation \eqref{iterativesolution}.

\begin{corollary}\label{corollarynodis+appro}
Consider system \eqref{system} with $w_k(t)\equiv0$, satisfying Assumptions \ref{assu1}, \ref{assu3}, \ref{assu4}, under the AILC input \eqref{ALCLaw} with the GDPA law \eqref{GDPA1}, \eqref{GDPAmodify}, \eqref{GDPA5}, state estimates \eqref{estimateX1}-\eqref{estimateX3}, and iterative approximation \eqref{iterativesolution}, with $p>p_o(k,t)$ and $0<\eta <2$, then
\begin{align*}
   & \limsup_{k\to \infty} |e_k(t+\rho)|\leq L^u
    \left(\sqrt{2V_0(t)}+\|\boldsymbol\theta(t)\|\right) \varepsilon,\nonumber
\end{align*}
for all $t\in\{0,1,\cdots,T-\rho\}$.
\end{corollary}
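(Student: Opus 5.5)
We would prove Corollary \ref{corollarynodis+appro} by combining the disturbance-free argument behind Corollary \ref{corollarydisturbance} with the approximation-error bookkeeping behind Theorem \ref{Th2}. Let $\tilde u_k(t)=\mathrm{u}_k^{p}(t)$ denote the input actually applied, and $u_k(t)$ the exact implicit solution of \eqref{pred-model} built from the same $\hat{\boldsymbol\theta}_k(t)$ and $\boldsymbol{X}_k^e(t)$. By Lemma \ref{lemma1}, $|\tilde u_k(t)-u_k(t)|<\varepsilon$ whenever $p>p_o(k,t)$.

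\textbf{Step 1: parameter convergence.} With $w_k(t)\equiv0$ and $a_k(t)\equiv1$, define $\tilde{\boldsymbol\theta}_k(t)=\hat{\boldsymbol\theta}_k(t)-\boldsymbol\theta(t)$ and $V_k(t)=\frac12\|\tilde{\boldsymbol\theta}_k(t)\|^2$. Write $f_k=\boldsymbol f(\boldsymbol X_k(t),\tilde u_k(t))$ and $m_k^2=1+\|f_k\|^2$. Since $\epsilon_k(t+\rho)=-\tilde{\boldsymbol\theta}_k(t)^\top f_k/m_k^2$, update \eqref{GDPAmodify} gives
\begin{align*}
V_{k+1}(t)-V_k(t)\le -\eta\Big(1-\frac{\eta}{2}\frac{\|f_k\|^2}{m_k^2}\Big)\epsilon_k^2 m_k^2\le-\eta\Big(1-\frac{\eta}{2}\Big)\epsilon_k^2m_k^2 .
\end{align*}
The projection \eqref{GDPA5} is nonexpansive toward $\boldsymbol\theta(t)\in\mathcal B(\underline{\boldsymbol\theta},R)$, so it only decreases $V$. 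Hence $V_k(t)\le V_0(t)$, which gives $\|\hat{\boldsymbol\theta}_k(t)\|\le\sqrt{2V_0(t)}+\|\boldsymbol\theta(t)\|$. It also gives $\sum_k\epsilon_k^2m_k^2<\infty$, so $\epsilon_k(t+\rho)m_k(t)\to0$.

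\textbf{Step 2: decomposing the tracking error.} Since there is no disturbance,
\begin{align*}
e_k(t+\rho)=&\big[x_k(t+\rho)-\hat{\boldsymbol\theta}_k(t)^\top f_k\big]\\
&+\hat{\boldsymbol\theta}_k(t)^\top\big[\boldsymbol f(\boldsymbol X_k(t),\tilde u_k(t))-\boldsymbol f(\boldsymbol X_k^e(t),\tilde u_k(t))\big]\\
&+\hat{\boldsymbol\theta}_k(t)^\top\big[\boldsymbol f(\boldsymbol X_k^e(t),\tilde u_k(t))-\boldsymbol f(\boldsymbol X_k^e(t),u_k(t))\big],
\end{align*}
using $\hat{\boldsymbol\theta}_k(t)^\top \boldsymbol f(\boldsymbol X_k^e(t),u_k(t))=r_k(t+\rho)$. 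The third term is bounded by $L^u(\sqrt{2V_0(t)}+\|\boldsymbol\theta(t)\|)\varepsilon$ using Assumption \ref{assu3} and Step 1. The first term equals $\epsilon_k m_k^2$. To show it vanishes we need $m_k$ bounded, i.e.\ $\boldsymbol X_k(t)$ and $\tilde u_k(t)$ bounded uniformly in $k$. We get this inductively in $t$. States before time $\rho$ are bounded by Assumption \ref{assu4}. The input satisfies $|u_k(t)|\le |c'|/d_0$ with $c'$ bounded via the Lipschitz bounds, the projection, and bounded $r_k$, and $\tilde u_k$ lies within $\varepsilon$ of it. Then $\epsilon_k m_k\to0$ gives $\epsilon_k m_k^2\to0$.

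\textbf{Step 3 (main obstacle): the estimator error.} The second term is bounded by $\|\hat{\boldsymbol\theta}_k(t)\|L^{\boldsymbol X}\|\boldsymbol X_k(t)-\boldsymbol X_k^e(t)\|$. We would show by strong induction on $t$ that $x_k(s)-x^e_k(s|t)\to0$ for $t<s\le t+\rho-1$. Each estimate in \eqref{estimateX1}--\eqref{estimateX2} is $\hat{\boldsymbol\theta}_k(j)^\top\boldsymbol f(\cdot)$ evaluated at partly estimated arguments, with $j<t$. Its difference from the true $x_k(j+\rho)$ splits into two pieces. One is $\epsilon_k(j+\rho)m_k^2(j)\to0$, from Step 1 applied at the earlier time $j$; note this is exact, with no $\varepsilon$ term. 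The other is a Lipschitz term in earlier estimation errors, which vanish by the induction hypothesis. The delicate point is that no $\varepsilon$ residue propagates. This holds because the estimator uses the actually applied inputs $\tilde u_k(j)$, so the model-fit residual at earlier times is $\epsilon_k(j+\rho) m_k(j)^2$ itself. That explains why the sum $\sum K_\ast^{i,t}(\cdot)$ of Theorem \ref{Th2} disappears when $w=0$.

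Combining the three terms and taking $\limsup_{k\to\infty}$ leaves only the $\varepsilon$ bound, which is the claim of Corollary \ref{corollarynodis+appro}.
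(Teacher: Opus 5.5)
Your proof is correct. At the top level it uses the same split as the paper: your first two terms together are exactly $x_k(t+\rho)-r_k^\ast(t+\rho)$ with $r_k^\ast(t+\rho)=\hat{\boldsymbol\theta}_k(t)^\top\boldsymbol f(\boldsymbol X_k^e(t),\mathrm u_k^p(t))$, and your third term is the paper's bound \eqref{randrstar}.

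Where you differ is in how you show $x_k(t+\rho)-r_k^\ast(t+\rho)\to0$. The paper does it in one line. It observes that $\mathrm u_k^p(t)$ is exactly the solution of \eqref{pred-model} with $r_k$ replaced by $r_k^\ast$, and then applies Corollary \ref{corollarydisturbance} verbatim to that reference. That argument runs through Lemma \ref{lemmabound} and the key technical lemma of Goodwin and Sin. You re-derive the result from scratch in three parts:
\begin{itemize}
\item a Lyapunov decrease that gives $\epsilon_k m_k\to0$ and the bound on $\|\hat{\boldsymbol\theta}_k(t)\|$;
\item a finite induction on $t$ that gives boundedness of $m_k(t)$ uniformly in $k$, which works because $m_k(t)$ depends only on residuals at times $j<t$ and $m_k(0)$ is bounded;
\item a direct comparison of each estimate $x_k^e(s|t)$ with the true state $x_k(s)$, written as a Lipschitz combination of past residuals $\epsilon_k(j+\rho)m_k(j)^2$.
\end{itemize}

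The paper's route is shorter and reuses existing machinery. However, it silently requires two things: that $r_k^\ast$ be bounded, which holds since $|r_k^\ast-r_k|\le L^u\|\hat{\boldsymbol\theta}_k(t)\|\varepsilon$, and that the estimator and the GDPA law be fed the applied input $\mathrm u_k^p$. Your route is self-contained and makes that second requirement explicit as the reason no $\varepsilon$ residue propagates through the estimator. It also replaces the key technical lemma with an elementary induction that exploits the finite horizon.

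A few small points to tighten:
\begin{itemize}
\item Lemma \ref{lemma1} is stated only at $p=p_o(k,t)$. Note explicitly that the bound $\frac{l'}{d_0}(1-\frac{d_0}{l'})^p|\mathrm u_k^1(t)-\mathrm u_k^0(t)|$ is nonincreasing in $p$, which extends it to $p>p_o(k,t)$.
\item In Step 3 the inner induction actually runs over $s$ within a fixed $t$, with the outer induction on $t$ supplying the vanishing residuals at $j<t$. State these two layers separately.
\item The disappearance of the $\sum K_\ast^{i,t}$ terms is due to the modified law ($a_k\equiv1$) making the residuals vanish, not merely to $w=0$.
\end{itemize}
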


\emph{Proof~}: See Appendix G.

\begin{remark}\label{remark5}
Three control input symbols are used so far: $u_k^\ast(t)$, $u_k(t)$, and $\mathrm{u}_k^p(t)$. The variable $u_k^\ast(t)$ represents the ideal control input introduced to establish the result in Proposition \ref{pp1}, showing the feasibility of the tracking problem. This input is not implementable due to the unknown parameters and unmeasurable state variables. The variable $u_k(t)$ represents the implicit control AILC given in \eqref{ALCLaw} as a solution to \eqref{pred-model} and used in Proposition \ref{pp2}. This input depends on the measured and estimated signals, but it is not always implementable since it is given in an implicit manner. The control input $\mathrm{u}_k^p(t)$ is an explicit iterative approximation of $u_k(t)$ that can be used for implementation in the case where an explicit solution to \eqref{pred-model} cannot be obtained.
\end{remark}

\begin{remark}\label{remark6}
According to \cite{9184115}, even for non-affine systems with relative degree of one, a P-type ILC requires careful selection of the learning gain and minimal system disturbances (Assumption A4 in \cite{9184115}). In contrast, our proposed AILC scheme deals with non-affine systems with high relative degrees, and the convergence condition is easier to satisfy. For DDILC, the core of its convergence analysis is the contraction mapping principle. This convergence process is related to the task being performed: Any change in the reference trajectory necessitates restarting the contraction process. Therefore, existing DDILC studies typically assume iteration-invariant reference trajectories \cite{9721014,9516984,chi2020JSC}. The convergence analysis of the proposed AILC approach is based on the Lyapunov-like function techniques and key technical lemma (see Appendix), which is independent of the task being performed. Thus, it can effectively track iteration-varying reference trajectories.
\end{remark}

\section{Numerical Simulations}\label{simu}

In this section, we present two simulation examples to illustrate the effectiveness of the proposed AILC scheme. First, we compare our proposed AILC with the DDILC approach on a non-affine nonlinear system with relative degree of one. Next, we apply the proposed AILC scheme to a double inverted pendulum system with relative degree of two. In these simulations, the AILC input $u_k(t)$ is solved directly from \eqref{pred-model} using ``{\tt solve}'' function in MATLAB.

\subsection{Example 1}\label{simuex1}

The system model is modified from \cite{chi2022data} (Chapter 2, pp. 25) and given as follows:
{\small\begin{align*}
x_k(t+1)&=\left(0.5+\frac{t}{50}\right)\frac{x_k(t)\sin(x_k(t))}{1+x_k(t)^2}+\left(0.75+\frac{t}{75}\right)\mathrm{e}^{\frac{x_k(t)}{100}}\\
&\quad+\left(1.5+0.5\times(-1)^t\right)u_k(t)^3\nonumber \\
&\quad +\sin\left(\frac{\pi}{4}+\frac{\pi}{100}t\right)(\arctan(u_k(t))+u_k(t)),\nonumber \\
&=\boldsymbol\theta(t)^{\top}\boldsymbol{f}(x_k(t),u_k(t)),\qquad  t\in [1,50],
\end{align*}
}where
{\small  \begin{align*}
    &\boldsymbol\theta(t)\\
    &=\left[0.5+\frac{t}{50},0.75+\frac{t}{75},1.5+0.5\times(-1)^t,\sin\left(\frac{\pi}{4}+\frac{\pi}{100}t\right)\right]^{\top}
\end{align*}}and
{\small  \begin{align*}
&\boldsymbol{f}(x_k(t),u_k(t))\\
&=\left[\frac{x_k(t)\sin(x_k(t))}{1+x_k(t)^2},\mathrm{e}^{\frac{x_k(t)}{100}},u_k(t)^3,\arctan(u_k(t))+u_k(t) \right]^{\top}
\end{align*}}are unknown parameter vector and known function vector, respectively. Note that there are various strong nonlinearities in $\boldsymbol{f}(x_k(t),u_k(t))$, such as polynomial functions and exponential functions, which do not satisfy the global Lipschitz condition. However, as the simulation results will show, our control scheme still achieves high-precision trajectory tracking.

\begin{figure}[!t]
	\centering
	\includegraphics[width=0.45\textwidth]{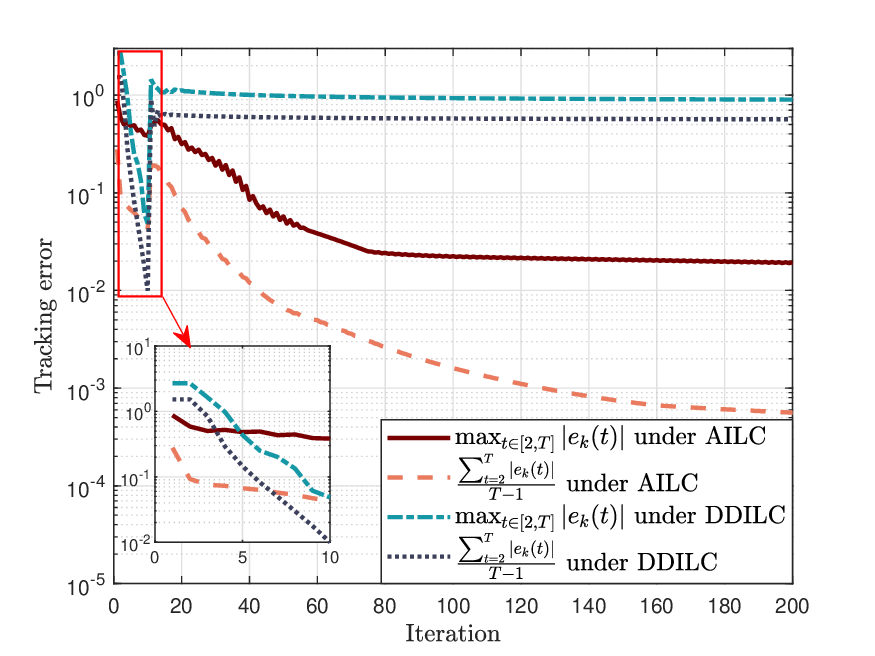}
	\caption{ Maximum and average tracking errors under AILC and DDILC in Example 1.}
	\label{fig:ex1compareerror}
\end{figure}
\begin{figure}[!t]
 \centering
 \subfigure[Tracking profiles for the cosine curve in odd iterations.]{
 \begin{minipage}{0.5\textwidth}
  \centering
  \includegraphics[width=1\textwidth]{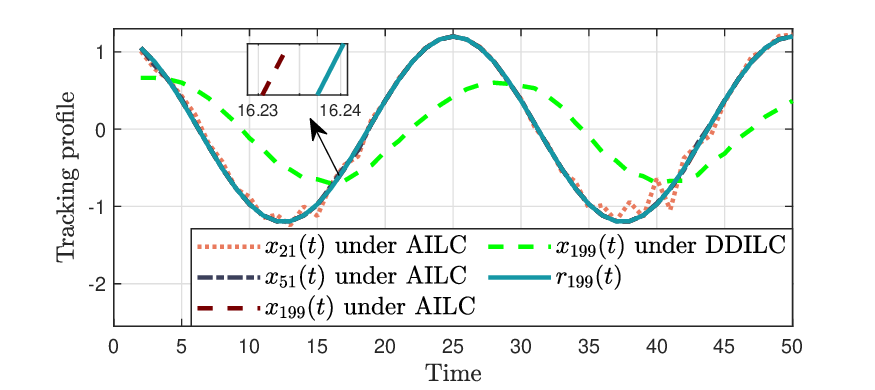}
 \end{minipage}
 }
 \subfigure[Tracking profiles for the sine curve in even iterations.]{

 \begin{minipage}{0.5\textwidth}
 \centering
  \includegraphics[width=1\textwidth]{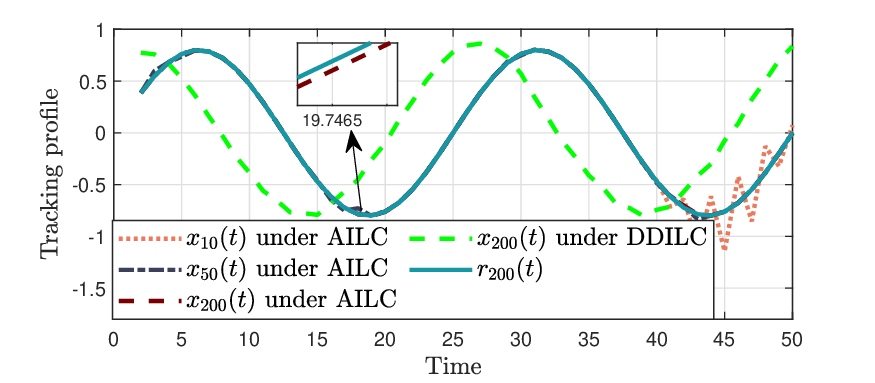}
 \end{minipage}
}
 \caption{Tracking profiles for the iteration-varying reference trajectory under AILC and DDILC in Example 1.} \label{fig:learnability200199}
\end{figure}

First, we compare the proposed AILC with DDILC in the disturbance-free case.  The reference trajectory is set as
\begin{align*}
    r_k(t)=\left\{\begin{array}{ll}
         0.8\sin\left(\frac{2\pi}{25}t \right),& k\leq 10,\ \text{or}\ k\hspace{-2mm}\mod2=0,  \\
        1.2\cos\left(\frac{2\pi}{25}t \right),& k>10,\ \text{and}\ k\hspace{-2mm}\mod2=1.
         \end{array}\right.
\end{align*}
It is evident that $r_k(t)$ is iteration-invariant in the first $10$ iterations, which aims to demonstrate the effectiveness of DDILC and the proposed AILC for invariant reference trajectory. From the $11$th iteration, $r_k(t)$ switches between a cosine curve for odd $k$ and a sine curve for even $k$ in each iteration. The initial state is set as $x_k(1)=0$. Let $\Delta x_k(t)=x_k(t)-x_{k-1}(t)$ and $\Delta u_k(t)=u_k(t)-u_{k-1}(t)$ denote the difference of state or input between adjacent iterations, respectively. The DDILC approach is presented as follows (scheme (2.16)-(2.18) in \cite{chi2022data})
\begin{align*}
u_{k+1}(t)=u_k(t)+\frac{\rho'\hat{\theta}'_{k+1}(t)(r_k(t+1)-x_k(t+1))}
{\lambda'+|\hat{\theta}'_{k+1}(t)|^2},
\end{align*}
where $\hat{\theta}'_{k+1}(t)=\hat{\theta}'_{0}(t)$ if $\sign(\hat{\theta}'_{k+1}(t))\neq \sign(\hat{\theta}'_{0}(t))$ or $|\hat{\theta}'_{k+1}(t)|\leq 10^{-4}$ or $|u_{k}(t)-u_{k-1}(t)|\leq 10^{-4}$, otherwise
\begin{align*}
\hat{\theta}'_{k+1}(t)=\hat{\theta}'_{k}(t)+\frac{\eta'\Delta u_k(t)(\Delta x_k(t+1)-\hat{\theta}'_{k}(t)\Delta u_k(t))}{\mu'+|\Delta u_k(t)|^2}.
\end{align*}
\begin{table}[!t]
        \scriptsize
	\renewcommand{\arraystretch}{0.9}
	\caption{Disturbances in Example 1.}
	\label{table1}
	\centering
	\begin{tabular}{c c c}
		\hline
		 &Disturbance case & Mathematical Expression \\
		\hline\\
        1 &Uniform distribution&  $\mathcal{U}[-0.01,0.01]$\vspace{1mm}\\

        2 & Gaussian distribution&  $\mathcal{N}(0,0.01)$\vspace{1mm}\\

       3 & \makecell{Bernoulli-like\\  distribution}&$\left\{\begin{array}{l}P(w_k(t)=0.03)=0.3\\P(w_k(t)=-0.01)=0.7\end{array}\right.$ \vspace{1mm}\\

       4 & \makecell{Trigonometric\\disturbance}& \makecell{$w_k(t)=0.01\sin(\frac{k\pi t}{50})$\\ \qquad\qquad\qquad$+0.006\cos(\frac{k\pi t}{2})$}\vspace{1mm}\\

     5 & \makecell{High order internal\\model disturbance }& \makecell{$w_k(t)\hspace{-1mm}=\hspace{-1mm}\left\{
	 \begin{array}{l}
\frac{5}{3}w_{k-1}(t)\hspace{-1mm}-\hspace{-1mm}\frac{2}{3}w_{k-2}(t),k>2  \\
-0.02, k=2\\
0.02, k=1
	\end{array}
	\right.$}\vspace{1mm}\\

6 &\makecell{State-dependent\\ disturbance}&
    \makecell{$w_k(t)=0.01x_k(t)$\\ \qquad \quad \qquad\qquad\quad$-0.01\sin(x_k(t)\pi)$}\\
    \hline
	\end{tabular}
\end{table}
\begin{figure}[!t]
 \subfigure[Uniform distribution.]{
       \centering
        \includegraphics[width=0.24\textwidth]{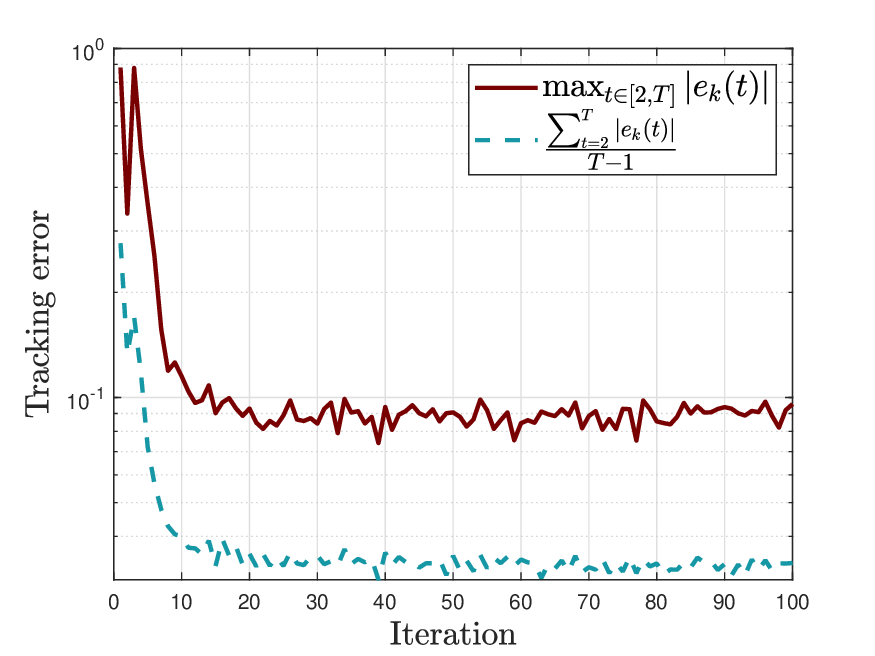}
    }\subfigure[Gaussian distribution.]{
        \centering
        \includegraphics[width=0.24\textwidth]{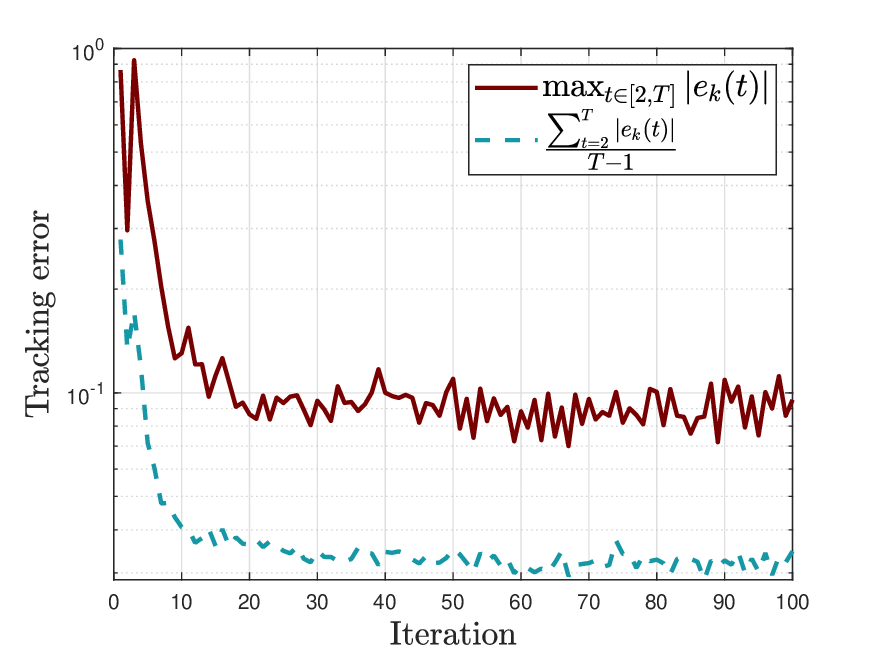}
    }

    \subfigure[Bernoulli-like distribution.]{
        \centering
        \includegraphics[width=0.24\textwidth]{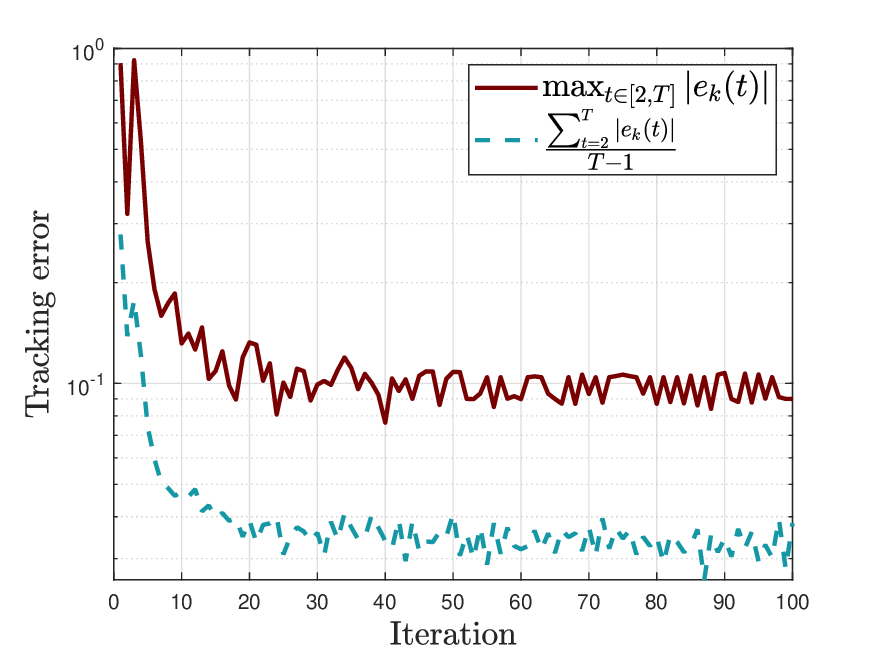}
    }\subfigure[Trigonometric disturbance.]{
       \centering
        \includegraphics[width=0.24\textwidth]{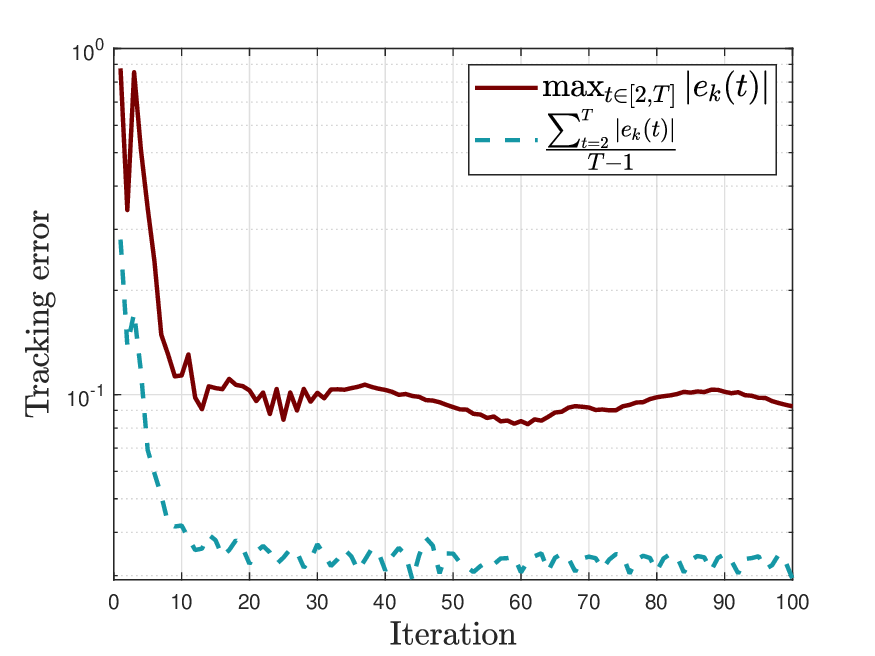}
    }

    \subfigure[High order internal model disturbance.]{
        \centering
        \includegraphics[width=0.24\textwidth]{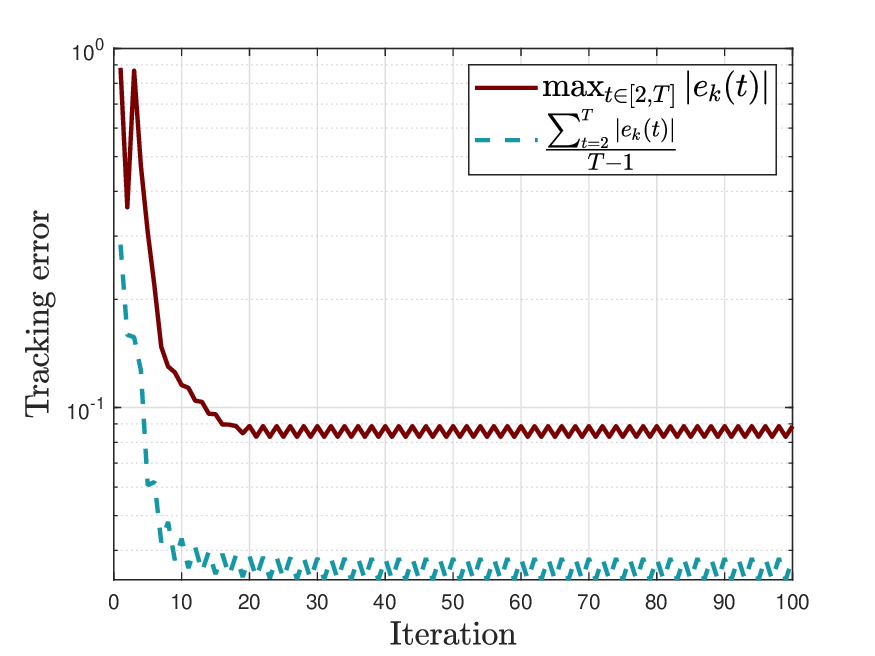}
    }\subfigure[State-dependent disturbance.]{
        \centering
        \includegraphics[width=0.24\textwidth]{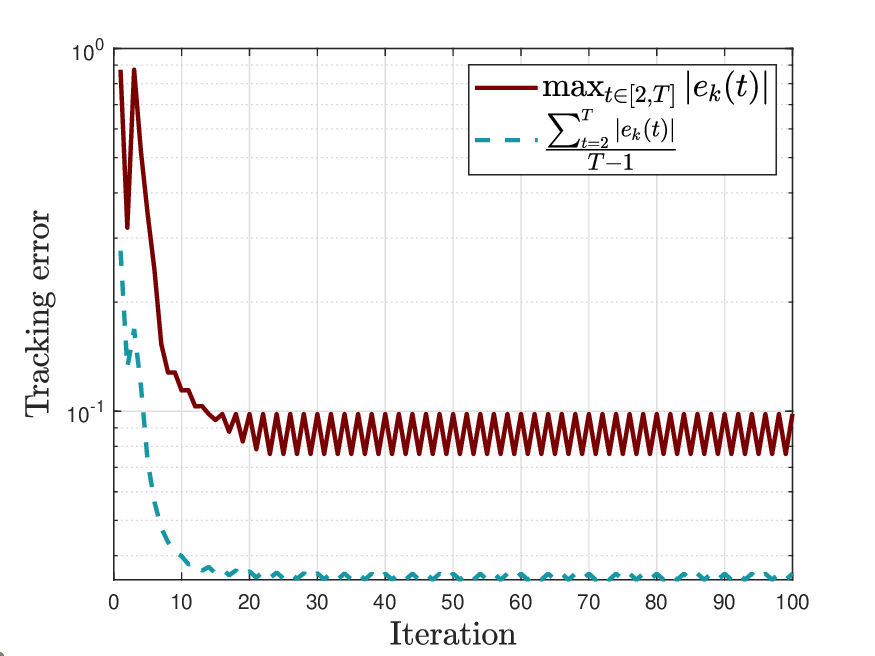}
    }

    \caption{Tracking error under different disturbances in Example 1.}
    \label{fig:ex1errordisturbance}
\end{figure}
The parameters in DDILC approach  are selected as $\eta'=0.5$, $\rho'=0.4$, $\lambda'=1$, $\mu'=0.5$. The initial values of PPD estimation and input are selected as $\hat{\theta}'_{0}(t)=1$, $u_0(t)=0$. For the modified AILC approach \eqref{GDPA1}, \eqref{GDPAmodify}, \eqref{GDPA5}, \eqref{estimateX1}-\eqref{estimateX3}, and \eqref{ALCLaw},
the center and radius of the closed ball used for the projection $\mathcal{B}(\underline{\boldsymbol\theta},R)$ are selected as $\underline{\boldsymbol\theta}=(1,1,1,1)$ and $R=0.9$. The estimation gain is set as $\eta=1.9$, and the initial value of the
parametric estimation is chosen as $\hat{\boldsymbol{\theta}}_0(t)=(1,1,1,1)$. In  Fig.\,\ref{fig:ex1compareerror}, we plot the maximum and average tracking errors on a logarithmic scale along the iteration axis for the DDILC and AILC. It is seen that, during the first $10$ iterations, both DDILC and AILC can reduce the tracking error for the iteration-invariant reference trajectory; however, starting from the $11$th iteration, the contraction mapping process that DDILC relies on must restart at each iteration due to the change of the reference trajectory. Our proposed AILC approach shows significant advantages in tracking the non-repetitive reference trajectory.  Fig.\,\ref{fig:learnability200199} compares the reference and actual trajectories under the DDILC and AILC (note that the relative degree in this example is 1, thus the tracking profiles start from time $t=2$), demonstrating that the proposed AILC approach involves outstanding learning ability and effectively achieves the trajectory tracking goal.

Then, we verify the robustness of the proposed AILC approach \eqref{GDPA1}-\eqref{GDPA5}, \eqref{estimateX1}-\eqref{estimateX3}, and \eqref{ALCLaw} to unknown disturbances. The reference trajectory is set as
\begin{align*}
    r_k(t)=\left\{\begin{array}{ll}
         0.8\sin\left(\frac{2\pi}{25}t \right),& \text{if}~ k\hspace{-2mm}\mod 2=1,  \\
         0.5+0.5\times(-1)^{\lfloor \frac{t}{20}\rfloor},& \text{if}~ k\hspace{-2mm}\mod 2=0.
         \end{array}\right.
\end{align*}It is evident that $r_k(t)$ switches between a sine wave for odd $k$ and a square wave curve for even $k$ in each iteration. The initial state value $x_k(0)$ is selected randomly within $[0,0.01]$. We consider six different types of disturbances as shown in Table \ref{table1} to characterize  various scenarios. The cases 1, 2, and 3 are stochastic disturbances, and cases 4, 5, and 6 are deterministic disturbances. Under the same control parameters, the error convergence profiles corresponding to six different disturbance situations are shown in Fig.\,\ref{fig:ex1errordisturbance}. It is seen that the tracking error converges to the neighborhood of zero. Note that since the characteristic polynomial of the high-order internal mode is stable, the disturbance generated by it converges along the iteration axis, which alleviates the oscillation effect in Fig.\,\ref{fig:ex1errordisturbance} (e). The state-dependent disturbance and stochastic disturbance obeying a Gaussian distribution do not satisfy the boundedness condition (Assumption \ref{assu2}), but as shown in  Fig.\,\ref{fig:ex1errordisturbance} (b) and (f), the AILC scheme is still robust.
\subsection{Example 2}\label{simuex2}

We apply the proposed AILC scheme to a double inverted pendulum system described by the following dynamics \cite{Hovakimyan01012001,802914}:
\begin{align}
\left\{ \begin{array}{l} \label{18}
   \dot{x}_{1,1}=x_{1,2}, \\
   \dot{x}_{1,2}=(\frac{m_1gr}{J_1}-\frac{kr^2}{4J_1})\sin(x_{1,1})+\frac{kr(l-b)}{2J_1}\\
   \qquad\quad+ \frac{kr^2}{4J_1}\sin(x_{2,2})+\frac{u_{1_{\max}}}{J_1}\tanh(u_1),\\
   \dot{x}_{2,1}=x_{2,2},\\
   \dot{x}_{2,2}=(\frac{m_2gr}{J_2}-\frac{kr^2}{4J_2})\sin(x_{2,1})+\frac{kr(l-b)}{2J_2}\\
   \qquad\quad+ \frac{kr^2}{4J_2}\sin(x_{1,2})+\frac{u_{2_{\max}}}{J_2}\tanh(u_2),
   \end{array}\right.
\end{align}
where $x_{1,1}$ and $x_{2,1}$ are the angular displacements of the pendulums from vertical. The physical interpretations of $l$, $b$, $g$, $k$, $r$, $J_i$, $m_i$, and $u_{i_{\max}}$ ($i=1,2$) are given in \cite{Hovakimyan01012001}.
Let $\theta^{i1}=\frac{m_igr}{J_i}-\frac{kr^2}{4J_i}$, $\theta^{i2}=\frac{kr}{2J_i}(l-b)$, $\theta^{i3}=\frac{kr^2}{4J_i}$, $\theta^{i4}=\frac{u_{i_{\max}}}{J_i}$,
$i=1,2$.
The true values of the unknown parameters $\theta^{i1}$, $\theta^{i2}$, $\theta^{i3}$, and $\theta^{i4}$ are calculated based on data in \cite{Hovakimyan01012001} and given as
$\boldsymbol\theta^{(1)}=[\theta^{11},\theta^{12},\theta^{13},\theta^{14}]=[7.12,30,12.5,40]$ and
$\boldsymbol\theta^{(2)}=[\theta^{21},\theta^{22},\theta^{23},\theta^{24}]=[9.62,24,10,32]$.

Using the Euler discretization method and adding the iteration index,\eqref{18} can be rewritten as a discrete-time non-affine nonlinear system with two inputs and two outputs as follows:
\begin{align}\label{ex1rewrite}
\left\{ \begin{array}{l}
x_k^{(1)}(t+2)=2x_k^{(1)}(t+1)-x_k^{(1)}(t)+\theta^{11}\sin(x_k^{(1)}(t))\\
\hspace{22mm}+\theta^{12}+\theta^{13}\sin(x_k^{(2)}(t+1)-x_k^{(2)}(t))\\
\hspace{22mm}+\theta^{14}\tanh(u_k^{(1)}(t)),\\
x_k^{(2)}(t+2)=2x_k^{(2)}(t+1)-x_k^{(2)}(t)+\theta^{21}\sin(x_k^{(2)}(t))\\
\hspace{22mm}+\theta^{22}+\theta^{23}\sin(x_k^{(1)}(t+1)-x_k^{(1)}(t))\\
\hspace{22mm}+\theta^{24}\tanh(u_k^{(2)}(t)),
\end{array}\right.
\end{align}
where $x_k^{(1)}(t)$ and $x_k^{(2)}(t)$ denote the angular displacements of the pendulums with respect to the vertical position, and $t\in [1,50]$. Its relative degree is two. The Jacobian matrix of the MIMO system \eqref{ex1rewrite} with respect to the input is diagonal (and therefore strictly diagonally dominant). It is straightforward to generalize the proposed AILC approach to system \eqref{ex1rewrite}. We use data in the $(k-1)$th iteration to generate parameter estimates $\hat{\boldsymbol\theta}_k^{(i)}(t)$, then, at time instant $t$ of the $k$th iteration, we use signals $\hat{\boldsymbol\theta}_k^{(i)}(t)$, $x_k^{(i)}(t)$, $x_k^{(i)}(t-1)$ to estimate the unavailable state $x_k^{(i)}(t+1)$. The input $u_k(t)$ is obtained by solving the implicit function determined by $x_k^{(i)}(t)$, $x_k^{(i),e}(t+1|t)$, $\hat{\boldsymbol\theta}_k^{(i)}(t)$, and $r_k^{(i)}(t+2)$.

The reference trajectory is
$r_k^{(1)}(t)=r_k^{(2)}(t)=0.1\sin\left(\frac{2\pi t}{25}\right)$, $t\in [1,50]$.
The center and radius of the closed balls
for the projection are set as
$\underline{\boldsymbol\theta}^{(1)}=(7.13, 29.98, 12.52, 39.97)$, $R^{(1)}(t)=0.11$, and
$\underline{\boldsymbol\theta}^{(2)}=(9.63, 24.02, 9.98, 32.02)$, $R^{(2)}(t)=0.11$.
The initial state values $x_k^{(i)}(1)$, $x_k^{(i)}(2)$, $i=1,2,$ are selected randomly within $[0,0.1]$. The initial estimations are set as $\hat{\boldsymbol\theta}_0^{(i)}(t)=0$, $i=1,2$.

\begin{figure}[!t]
	\centering
	\includegraphics[width=0.45\textwidth]{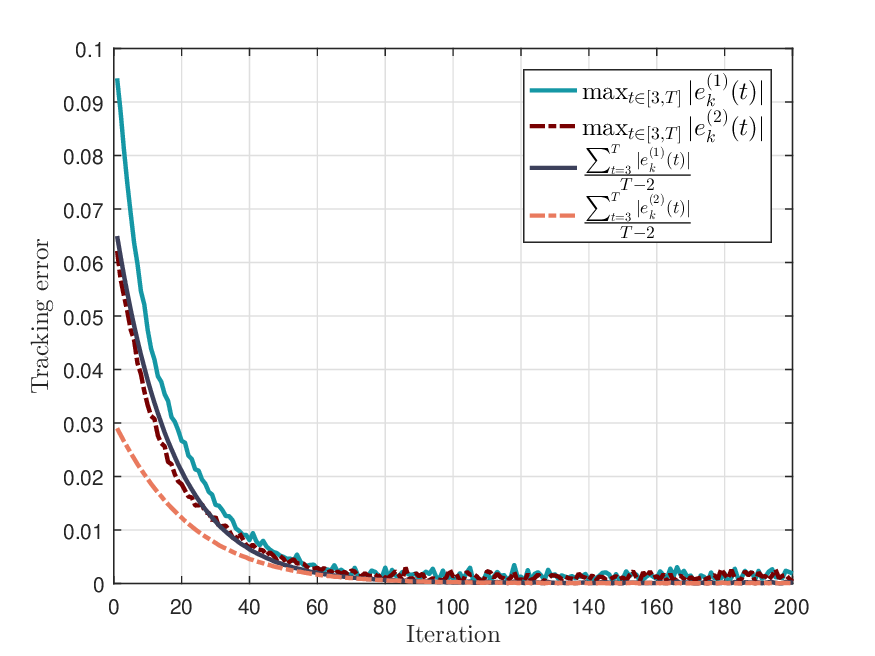}
	\caption{Maximum and average tracking errors of each iteration in Example 2 without disturbance.}
	\label{fig:Highnodiserror}
\end{figure}
\begin{figure}[!t]
 \centering
 \subfigure[Tracking profiles of $x_k^{(1)}(t)$.]{
 \begin{minipage}{0.5\textwidth}
  \centering
  \includegraphics[width=1\textwidth]{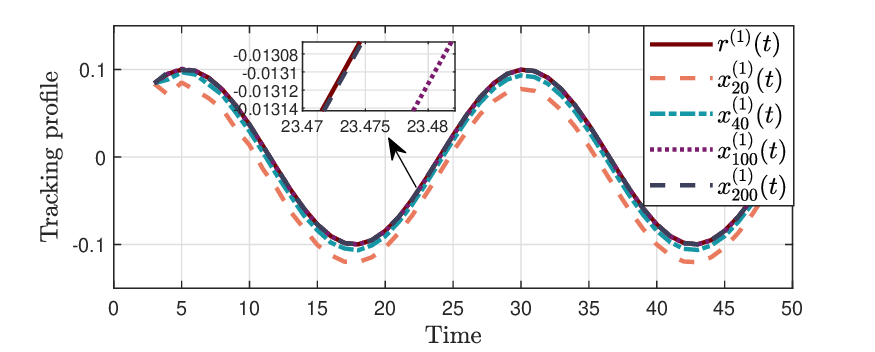}
 \end{minipage}
 }
 \subfigure[Tracking profiles of $x_k^{(2)}(t)$.]{

 \begin{minipage}{0.5\textwidth}
 \centering
  \includegraphics[width=1\textwidth]{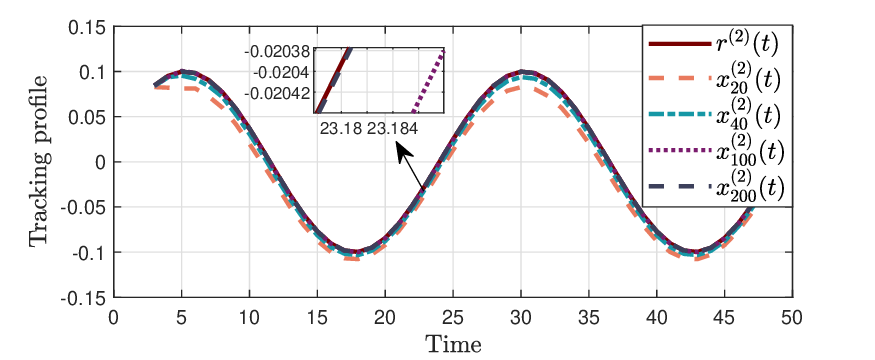}
 \end{minipage}
}
 \caption{Tracking profiles in the 20th, 40th, 100th, and 200th iterations in Example 2 without disturbance.}\label{fig:Highnodistrack}
\end{figure}

We first verify the control performance of the modified AILC scheme \eqref{GDPA1}, \eqref{GDPAmodify}, \eqref{GDPA5}, \eqref{estimateX1}-\eqref{estimateX3}, and \eqref{ALCLaw} in the absence of system disturbances. The control gains are set to $\eta^{(1)}=\eta^{(2)}=0.1$. Fig.\,\ref{fig:Highnodiserror} shows the maximum and average tracking errors as functions of the number of iterations, indicating rapid error reduction. Fig.\,\ref{fig:Highnodistrack} plots the reference and actual trajectories at the 20th, 40th, 100th, and 200th iterations, demonstrating that the actual trajectories converge closely to the reference by the 200th iteration (note that the relative degree in this example is 2, thus the tracking profiles start from time $t=3$). This verifies the learning ability of the proposed AILC scheme.

Next, we use the AILC scheme \eqref{GDPA1}-\eqref{GDPA5}, \eqref{estimateX1}-\eqref{estimateX3}, and \eqref{ALCLaw} to verify control performance in the presence of system disturbances. The system disturbances are modeled as
$w_k^{1}(t)=1\times10^{-4}\cos(kt\pi)+1\times10^{-4}\sin\left(\frac{kt\pi}{2}\right)$ and
$w_k^{2}(t)=1\times10^{-4}\cos(2kt\pi)+1\times10^{-4}\sin(kt\pi)$.
The selection of the control parameters remains unchanged.
Fig.\,\ref{fig:Highdiserror} shows the convergence of the maximum and average tracking errors over iterations, and Fig.\,\ref{fig:Highdistrack200} illustrates the reference and actual trajectories at the 100th iteration, revealing that while tracking is generally successful, accuracy is somewhat compromised by the system disturbances.

\begin{figure}[!t]
	\centering
	\includegraphics[width=0.45\textwidth]{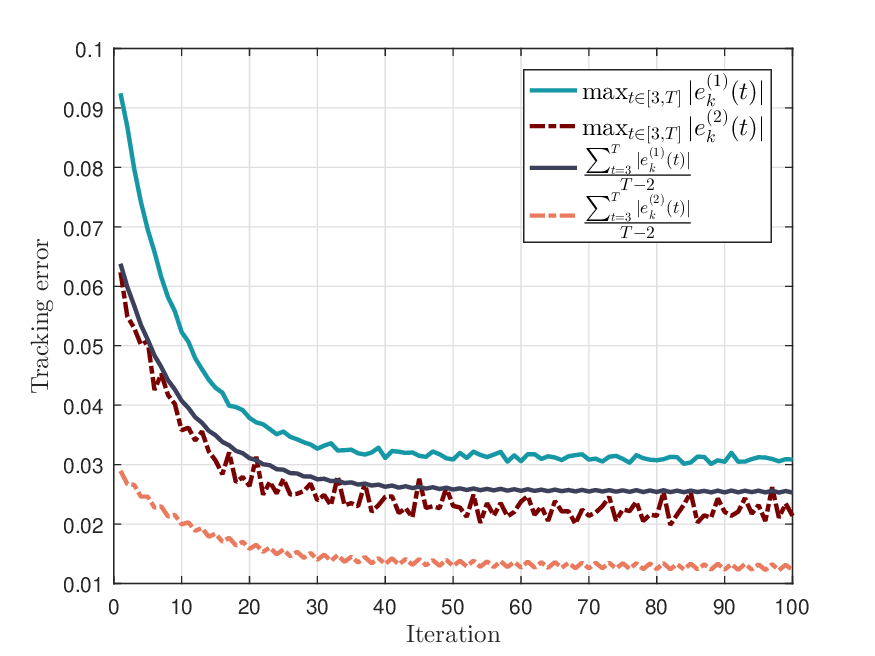}
	\caption{ Maximum and average tracking errors of each iteration in Example 2 in the presence of disturbances.}
	\label{fig:Highdiserror}
\end{figure}

\section{Conclusions}\label{conclu}

This paper deals with the trajectory tracking problem for a class of non-affine nonlinear systems. A rigorously designed AILC scheme, which integrates a GDPA law and a state estimator is proposed, departing significantly from the existing ILC methodologies that rely on dynamic linearization techniques or neural networks. It comprehensively addresses various practical considerations, including high relative degrees, system uncertainties and disturbances, iteration-varying reference trajectories. We also proposed a numerical approach that approximates the control input, up to a desired accuracy, in the case where an explicit AILC input for the non-affine system cannot be obtained. A rigorous analysis of the tracking performance, substantiated by numerical simulations, is also provided.

Our future work will focus on relaxing the global Lipschitz condition. In particular, two possible technical routes may be pursued: the direct one is to develop new convergence analysis tools that can handle nonlinear relationships between variables without relying on the key technical lemma, and the indirect one is to transform the local Lipschitz system into a global one in a compact set by ensuring the boundedness of system signals first, for example via input saturation or barrier functions.

\begin{figure}[!t]
	\centering
	\includegraphics[width=0.45\textwidth]{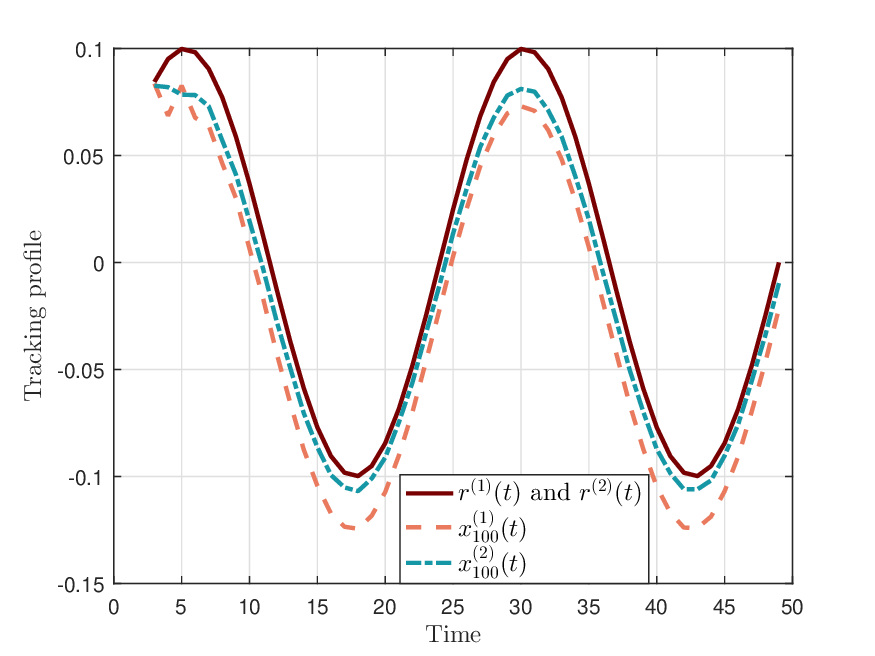}
	\caption{Tracking profiles of the 100th iteration in Example 2 in the presence of disturbances.}
	\label{fig:Highdistrack200}
\end{figure}

\section*{Appendix}
{\subsection{Proof of Proposition \ref{pp1}}\label{APPendixA}

Without loss of any generality, we consider the case of $\boldsymbol\theta(t)^\top\frac{\partial }{\partial u}\boldsymbol{f}(\boldsymbol{X}_k(t),u)> d_0$. Due to the continuity of $\boldsymbol\theta(t)^\top\frac{\partial }{\partial u}\boldsymbol{f}(\boldsymbol{X}_k(t),u)$, there exists a constant $M>0$ such that $\sup_{u\in \mathcal{B}(0,\frac{|c|}{d_0})}|\boldsymbol\theta(t)^\top\frac{\partial }{\partial u}\boldsymbol{f}(\boldsymbol{X}_k(t),u)|<M$. Choosing $l>M$, one obtains
\begin{align*}
    \left|1-\frac{1}{l}\boldsymbol\theta(t)^\top \frac{\partial }{\partial u}\boldsymbol{f}(\boldsymbol{X}_k(t),u)\right| \leq 1-\frac{d_0}{l},\ \forall\ u\in \mathcal{B}\left(0,\frac{|c|}{d_0}\right).
\end{align*}

Therefore, for any $u \in \mathcal{B}\Big(0,\frac{|c|}{d_0}\Big)$, by virtue of the differential mean value theorem, one obtains
\begin{align*}
|\mathcal{T}(u)|&=\Big|u-\frac{\mathcal{Z}(0)}{l}-\frac{1}{l}(\mathcal{Z}(u)-\mathcal{Z}(0))\Big| \nonumber\\
&=\left|u-\frac{c}{l}-\boldsymbol\theta(t)^\top\frac{\partial }{\partial u}\boldsymbol{f}(\boldsymbol{X}_k(t),u)\big|_{u=\xi} \frac{u}{l} \right|\\
&=\frac{|c|}{l}+\left|(1-\frac{1}{l} \boldsymbol\theta(t)^\top\frac{\partial }{\partial u}\boldsymbol{f}(\boldsymbol{X}_k(t),u)\big|_{u=\xi})u   \right| \\
&\leq \frac{|c|}{l}+(1-\frac{d_{0}}{l})\frac{|c|}{d_0}\\
&=\frac{|c|}{d_0},
\end{align*}
where $\xi$ is a mean value satisfying $|\xi|\leq |u|$. Therefore, the mapping $\mathcal{T}$ maps $\mathcal{B}\Big(0,\frac{|c|}{d_0}\Big)$ to itself. For any $u_1,\ u_2\in\mathcal{B}\Big(0,\frac{|c|}{d_0}\Big)$, the differential mean value theorem yields
\begin{align}\label{CM}
    &\Big|\mathcal{T}(u_1)-\mathcal{T}(u_2) \Big|\nonumber\\
    &\leq   \Big| 1-\frac{1}{l}\boldsymbol\theta(t)^\top\frac{\partial }{\partial u}\boldsymbol{f}(\boldsymbol{X}_k(t),u)\big|_{u=\xi'}\Big|~|u_1-u_2|\nonumber\\
    &\leq (1-\frac{d_0}{l})|u_1-u_2|,
\end{align}
where $\xi'$ is a mean value satisfying $|\xi'|\leq \max\{|u_1|,|u_2|\}$. From \eqref{CM}, $\mathcal{T}$ is a contraction mapping. According to Banach's fixed point theorem, there exists a unique $u_k^\ast(t)\in \mathcal{B}\Big(0,\frac{|c|}{d_0}\Big)$ such that $ \mathcal{T}(u_k^\ast(t))=u_k^\ast(t)$, which means $u_k^\ast(t)$ satisfies \eqref{trackingequation}. The proof is complete. \hfill$\blacksquare$

\subsection{Proof of Theorem \ref{Th1}} \label{APPendixB}

In this proof, we use a capital $C$ to represent constants independent of $k$.
\begin{lemma} \label{lemmaestimate}
Under the conditions of Theorem \ref{Th1}, for each $t\in\{0,\cdots,T-\rho\}$, the following statements hold:

(1) The estimates $\hat{\boldsymbol\theta}_k(t)$ and $\hat{w}_k(t)$ are bounded.

(2) As $k$ approaches infinity, $a_k (t)^2\epsilon_k (t+\rho)^2m_k (t)^2$ and $\hat{w}_k(t)$ converge to zero and a specific positive constant, respectively.
\end{lemma}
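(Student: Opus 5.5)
The plan is a Lyapunov-like argument along the iteration axis for each fixed $t$, using
\[
V_k(t)=\tfrac12\|\hat{\boldsymbol\theta}_k(t)-\boldsymbol\theta(t)\|^2+\tfrac12(\hat w_k(t)-w)^2 .
\]

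\textbf{Setting up the error identity.} Write $\tilde{\boldsymbol\theta}_k=\hat{\boldsymbol\theta}_k(t)-\boldsymbol\theta(t)$ and $\boldsymbol f_k=\boldsymbol f(\boldsymbol X_k(t),u_k(t))$. The true state vector $\boldsymbol X_k(t)$ enters only the update law, which is applied after the trial, so the system \eqref{system} gives
\[
\epsilon_k(t+\rho)m_k(t)^2=-\tilde{\boldsymbol\theta}_k^\top\boldsymbol f_k+w_k(t).
\]

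\textbf{Effect of the projection.} Since $\boldsymbol\theta(t)\in\mathcal B(\underline{\boldsymbol\theta},R)$ by Assumption \ref{assu1} and the ball is convex, the projection \eqref{GDPA5} is nonexpansive toward $\boldsymbol\theta(t)$. Hence
\[
\|\hat{\boldsymbol\theta}_{k+1}-\boldsymbol\theta\|\le\|\bar{\boldsymbol\theta}_{k+1}-\boldsymbol\theta\| .
\]
It therefore suffices to bound $V$ evaluated with $\bar{\boldsymbol\theta}_{k+1}$ in place of $\hat{\boldsymbol\theta}_{k+1}$.

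\textbf{Expanding the difference.} Using \eqref{GDPA2} and \eqref{GDPA4}, write $a=a_k(t)$, $\epsilon=\epsilon_k(t+\rho)$, $m=m_k(t)$. Then
\begin{align*}
V_{k+1}-V_k&\le \eta a\epsilon\,\tilde{\boldsymbol\theta}_k^\top\boldsymbol f_k+\tfrac12\eta^2a^2\epsilon^2\|\boldsymbol f_k\|^2\\
&\quad+\eta a|\epsilon|(\hat w_k-w)+\tfrac12\eta^2a^2\epsilon^2 .
\end{align*}
Because $\|\boldsymbol f_k\|^2+1=m^2$, the two quadratic terms combine to $\tfrac12\eta^2a^2\epsilon^2m^2$.

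Substitute $\tilde{\boldsymbol\theta}_k^\top\boldsymbol f_k=w_k-\epsilon m^2$. This gives $-\eta a\epsilon^2m^2+\eta a\epsilon w_k\le-\eta a\epsilon^2m^2+\eta a|\epsilon|w$. The $w$ terms cancel against $-\eta a|\epsilon|w$, leaving
\[
V_{k+1}-V_k\le-\eta a\epsilon^2m^2+\eta a|\epsilon|\hat w_k+\tfrac12\eta^2a^2\epsilon^2m^2 .
\]

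\textbf{Using the dead zone.} When $a\neq0$, \eqref{GDPA3} gives $a|\epsilon|m^2=|\epsilon|m^2-\hat w_k$, i.e. $\hat w_k=(1-a)|\epsilon|m^2$. Therefore
\[
-\eta a\epsilon^2m^2+\eta a|\epsilon|\hat w_k=-\eta a^2\epsilon^2m^2 .
\]
Altogether,
\[
V_{k+1}-V_k\le-\eta\left(1-\tfrac{\eta}{2}\right)a_k(t)^2\epsilon_k(t+\rho)^2m_k(t)^2 .
\]
This step is the main obstacle: the cancellation must go through exactly, including the $a=0$ case, where the difference is trivially zero.

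\textbf{Conclusions.} With $0<\eta<2$, $V_k(t)$ is nonincreasing and bounded by $V_0(t)$.

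For statement (1), $\hat{\boldsymbol\theta}_k(t)$ stays in the ball by projection, and $|\hat w_k-w|\le\sqrt{2V_0(t)}$, so $\hat w_k(t)$ is bounded.

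For statement (2), telescoping gives
\[
\sum_k a_k^2\epsilon_k^2m_k^2\le\frac{V_0(t)}{\eta(1-\eta/2)}<\infty ,
\]
so $a_k(t)^2\epsilon_k(t+\rho)^2m_k(t)^2\to0$. Moreover $\hat w_k(t)$ is nondecreasing by \eqref{GDPA4} and bounded, so it converges to a limit. That limit is nonnegative (positive unless the dead zone is never active) and bounded by $w+\sqrt{2V_0(t)}$, the bound that will feed into Theorem \ref{Th1}.
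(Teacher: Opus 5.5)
Your proposal is correct and follows essentially the same route as the paper. It uses the same Lyapunov-like function $V_k(t)$, the same substitution $\tilde{\boldsymbol\theta}_k^\top\boldsymbol f_k=w_k-\epsilon m^2$, and the same dead-zone identity yielding $V_{k+1}-V_k\le(\tfrac12\eta^2-\eta)a^2\epsilon^2m^2$, followed by telescoping and monotonicity of $\hat w_k(t)$. The only cosmetic difference is that you handle the projection by nonexpansiveness onto a convex set containing $\boldsymbol\theta(t)$, whereas the paper bounds the correction term $\boldsymbol g_k(t)$ explicitly. Your caveat that $\lim_k\hat w_k(t)$ is only guaranteed nonnegative (it equals zero if $a_k(t)=0$ for every $k$) is accurate and sharper than the lemma's wording; the paper's proof likewise establishes only that the limit exists.
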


\emph{Proof of Lemma \ref{lemmaestimate}:~}  Let $\Tilde{\boldsymbol\theta}_k (t)=\hat{\boldsymbol\theta}_k (t) -\boldsymbol\theta (t)$ and $\Tilde{w}_{k} (t)=\hat{w} _{k}(t)-w$. One has
\begin{align*}
  \Tilde{\boldsymbol\theta}_{k+1} (t)&=\Tilde{\boldsymbol\theta}_{k} (t)+\eta a_k(t)\epsilon_k(t+\rho)\boldsymbol{f}(\boldsymbol{X}_k(t),u_k(t))+\boldsymbol{g}_k (t),
\end{align*}
where
\begin{align*}
    \boldsymbol{g}_k (t)=\left\{ \begin{array}{l}
   \boldsymbol{0},\  \text{if}\  \Bar{\boldsymbol\theta}_{k+1} (t)\in \mathcal{B}_2(\underline{\boldsymbol\theta}, R), \\
	\left(\frac{R}{\|\Bar{\boldsymbol\theta}_{k+1} (t)-\underline{\boldsymbol\theta}\|}-1\right)(\Bar{\boldsymbol\theta}_{k+1} (t)-\underline{\boldsymbol\theta}),\ \text{otherwise.} \end{array}\right.
\end{align*}

Note that
\begin{align}\label{differenceW}
    \Tilde{w}_{k+1}(t)=\Tilde{w}_{k} (t)+\eta a_k(t)|\epsilon_k (t+\rho)|.
\end{align}

Define a Lyapunov-like positive-definite function
\begin{align*}
    V_k (t)=\frac{1}{2}\Tilde{\boldsymbol\theta}_k (t)^{\top}\Tilde{\boldsymbol\theta}_k (t)+\frac{1}{2}\Tilde{w}_{k} (t)^2.
\end{align*}

The difference between adjacent iterations of the first part in $V_k (t)$ is given by
\begin{align}\label{differencetheta}
    &\frac{1}{2}\Tilde{\boldsymbol\theta}_{k+1} (t)^{\top}\Tilde{\boldsymbol\theta}_{k+1} (t)-\frac{1}{2}\Tilde{\boldsymbol\theta}_k (t)^{\top}\Tilde{\boldsymbol\theta}_k (t)\nonumber\\
    &=\frac{1}{2}a_k (t)^2\eta ^2\epsilon_k (t+\rho)^2\|\boldsymbol{f}(\boldsymbol{X}_k(t),u_k(t))\|^2\nonumber \\
    &\quad+\Tilde{\boldsymbol\theta}_{k} (t)^{\top}\boldsymbol{f}(\boldsymbol{X}_k(t),u_k(t))a_k (t)\eta \epsilon_k (t+\rho)\nonumber\\
    &\quad+\frac{1}{2}\|\boldsymbol{g}_k (t)\|^2+\boldsymbol{g}_k (t)^{\top}(\Bar{\boldsymbol\theta}_{k+1} (t)-\boldsymbol\theta (t)).
\end{align}

If $\Bar{\boldsymbol\theta}_{k+1}(t)\notin \mathcal{B}_2(\underline{\boldsymbol\theta},R)$, one obtains
\begin{align} \label{g+gtheta}
&\frac{1}{2}\|\boldsymbol{g}_k (t)\|^2+\boldsymbol{g}_k (t)^{\top}(\Bar{\boldsymbol\theta}_{k+1} (t)-\boldsymbol\theta (t))\nonumber \\
&=\frac{1}{2} \left(1-\frac{R}{\|\Bar{\boldsymbol\theta}_{k+1} (t)-\underline{\boldsymbol\theta}\|}\right)^2\|\Bar{\boldsymbol\theta}_{k+1} (t)-\underline{\boldsymbol\theta}\|^2\nonumber
\\&\quad-\left(1-\frac{R}{\|\Bar{\boldsymbol\theta}_{k+1} (t)-\underline{\boldsymbol\theta}\|}\right) \left(\Bar{\boldsymbol\theta}_{k+1} (t)-\underline{\boldsymbol\theta}\right)^{\top}\nonumber\\
&\quad \times\left(\Bar{\boldsymbol\theta}_{k+1} (t)-\underline{\boldsymbol\theta}+\underline{\boldsymbol\theta}-\boldsymbol\theta (t)\right)\nonumber\\
&\leq \frac{1}{2}\left(\|\Bar{\boldsymbol\theta}_{k+1} (t)-\underline{\boldsymbol\theta}\|-R\right)^2\nonumber\\
&\quad-\Big(\|\Bar{\boldsymbol\theta}_{k+1} (t)-\underline{\boldsymbol\theta}\|-R\Big)\Big(\|\Bar{\boldsymbol\theta}_{k+1} (t)-\underline{\boldsymbol\theta}\|\nonumber\\
&\quad-\|\boldsymbol\theta (t)-\underline{\boldsymbol\theta}\|\Big)\nonumber\\
&\leq \Big(\|\Bar{\boldsymbol\theta}_{k+1} (t)-\underline{\boldsymbol\theta}\|-R\Big) \Big(\|\boldsymbol\theta (t)-\underline{\boldsymbol\theta}\|-R\Big)\nonumber\\
&\leq 0.
\end{align}

If $\Bar{\boldsymbol\theta}_{k+1} (t)\in \mathcal{B}_2(\underline{\boldsymbol\theta},R)$, $\boldsymbol{g}_k (t)=\boldsymbol{0}$ and \eqref{g+gtheta} still holds. Therefore, from \eqref{differencetheta} and \eqref{g+gtheta}, one obtains
\begin{align*}
         &\frac{1}{2}\Tilde{\boldsymbol\theta}_{k+1} (t)^{\top}\Tilde{\boldsymbol\theta}_{k+1} (t)-\frac{1}{2}\Tilde{\boldsymbol\theta}_k (t)^{\top}\Tilde{\boldsymbol\theta}_k (t)\nonumber\\
    &\leq \frac{1}{2}a_k (t)^2\eta ^2\epsilon_k (t+\rho)^2\|\boldsymbol{f}(\boldsymbol{X}_k(t),u_k(t))\|^2\nonumber \\
    &\quad+\Tilde{\boldsymbol\theta}_{k} (t)^{\top}\boldsymbol{f}(\boldsymbol{X}_k(t),u_k(t))a_k (t)\eta \epsilon_k (t+\rho).
\end{align*}

From \eqref{system} and \eqref{GDPA1}, one gets
\begin{align*}
        \Tilde{\boldsymbol\theta}_{k} (t)^{\top}\boldsymbol{f}(\boldsymbol{X}_k(t),u_k(t))=-m_k (t)^2\epsilon_k(t+\rho)+w_k(t).
\end{align*}
Therefore,
\begin{align*}
       &\frac{1}{2}\Tilde{\boldsymbol\theta}_{k+1} (t)^{\top}\Tilde{\boldsymbol\theta}_{k+1} (t)-\frac{1}{2}\Tilde{\boldsymbol\theta}_k (t)^{\top}\Tilde{\boldsymbol\theta}_k (t)\nonumber\\
   &\leq\frac{1}{2}a_k (t)^2\eta ^2\epsilon_k (t+\rho)^2\|\boldsymbol{f}(\boldsymbol{X}_k(t),u_k(t))\|^2-a_k (t)\nonumber\\
   &\quad\times\eta m_k (t)^2\left(\epsilon_k (t+\rho)^2-\frac{|\epsilon_k(t+\rho)|w }{m_k(t)^2}\right)\nonumber \\
   &=\frac{1}{2}a_k (t)^2\eta ^2\epsilon_k (t+\rho)^2\|\boldsymbol{f}(\boldsymbol{X}_k(t),u_k(t))\|^2-a_k (t)\nonumber\\
   &\quad\times \eta  m_k (t)^2\left(\epsilon_k (t+\rho)^2-\frac{|\epsilon_k (t+\rho)|\hat{w}_k (t)}{m_k (t)^2}\right)\nonumber\\
   &\quad-a_k (t)\eta |\epsilon_k (t+\rho)|\Tilde{w}_k (t).
\end{align*}

If $|\epsilon_k (t+\rho)|>\frac{\hat{w}_k (t)}{m_k (t)^2}$, one obtains
\begin{align}\label{aepsilonm}
       &a_k (t)^2\epsilon_k (t+\rho)^2m_k (t)^2 \nonumber\\
   &=a_k (t)\epsilon_k (t+\rho)^2m_k (t)^2\left(1-\frac{\hat{w}_k (t)}{|\epsilon_k (t+\rho)|m_k (t)^2}\right) \nonumber\\
   &=a_k (t)m_k (t)^2\left(\epsilon_k(t+\rho)^2-\frac{\hat{w}_k (t)|\epsilon_k (t+\rho)|}{m_k (t)^2}\right).
\end{align}

If $|\epsilon_k (t+\rho)|\leq \frac{\hat{w}_{k} (t)}{m_k (t)^2}$, $a_k (t)=0$ and \eqref{aepsilonm} still holds. Therefore, one has
\begin{align}\label{differencetheta2}
    &\frac{1}{2}\Tilde{\boldsymbol\theta}_{k+1} (t)^{\top}\Tilde{\boldsymbol\theta}_{k+1} (t)-\frac{1}{2}\Tilde{\boldsymbol\theta}_k (t)^{\top}\Tilde{\boldsymbol\theta}_k (t)\nonumber\\
&\leq \frac{1}{2}a_k (t)^2\eta ^2\epsilon_k (t+\rho)^2\|\boldsymbol{f}(\boldsymbol{X}_k(t),u_k(t))\|^2\nonumber\\
&\quad-a_k(t)\eta |\epsilon_k(t+\rho)|\Tilde{w}_k (t)\nonumber\\
&\quad-\eta a_k (t)^2\epsilon_k (t+\rho)^2m_k(t)^2.
\end{align}

For the second part of $V_k(t)$, from \eqref{differenceW}, one has
\begin{align}\label{differencew2}
    \frac{1}{2}\Tilde{w}_{k+1}(t)^2-\frac{1}{2}\Tilde{w}_{k} (t)^2=&\frac{1}{2}a_k (t)^2\eta ^2\epsilon_k (t+\rho)^2\nonumber\\
&+\Tilde{w}_{k} (t)\eta a_k (t)|\epsilon_k(t+\rho)|.
\end{align}

Furthermore, \eqref{differencetheta2} and \eqref{differencew2} yield
\begin{align}\label{decrease}
    \Delta V_{k+1} (t)&=V_{k+1}(t)- V_{k}(t)\nonumber\\
&\leq \left(\frac{1}{2}\eta ^2-\eta \right)a_k (t)^2\epsilon_k (t+\rho)^2m_k (t)^2.
\end{align}

If $0<\eta <2$, then $\frac{1}{2}\eta ^2-\eta<0$, and $V_k(t)$ is monotonically decreasing along the iteration axis. Therefore, $\Tilde{\boldsymbol\theta}_{k}(t)$ and $\hat{w}_k(t)$ are bounded. The first claim of Lemma \ref{lemmaestimate} is proved.

From \eqref{decrease}, one obtains
\begin{align*}
(\eta-\frac{1}{2}\eta^2)a_k (t)^2\epsilon_k (t+\rho)^2m_k (t)^2\leq V_k(t)-V_{k+1}(t).
\end{align*}
Summing both sides from $k=1$ to $k'$ yields
\begin{align}\label{sum}
\sum_{k=1}^{k'}(\eta-\frac{1}{2}\eta^2)a_k (t)^2\epsilon_k (t+\rho)^2m_k (t)^2\leq V_1(t)-V_{k'+1}(t).
\end{align}
Inequality \eqref{sum} indicates that the series $\sum_{k=1}^{\infty}a_k (t)^2\epsilon_k (t+\rho)^2m_k (t)^2$ converges. Thus,
\begin{align*}
       \lim_{k\to \infty}a_k (t)^2\epsilon_k (t+\rho)^2m_k (t)^2=0.
\end{align*}
In addition, according to \eqref{GDPA4}, the sequence $\{\hat{w}_k(t)\}_{k=1}^{\infty}$ is monotonically increasing and bounded. Therefore $\lim_{k\to\infty} \hat{w}_k(t)$ exists.
Lemma \ref{lemmaestimate} is proven. \hfill$\blacksquare$

The following definition characterizes the \emph{linear boundedness} for further analysis.
\begin{definition}
For variable sequences $\{a_k\}$, $\{b_k^{(1)}\}$, $\{b_k^{(2)}\} $,$\cdots$,$\{b_k^{(n)}\}$, we say that $\{a_k\}$ is \emph{linearly bounded} by $\{b_k^{(1)}\}$, $\{b_k^{(2)}\} $,$\cdots$,$\{b_k^{(n)}\}$, if there exist positive constants $C^{(1)}$, $C^{(2)}$,$\cdots$,$C^{(n)}$ that are independent of $k$, such that
\begin{align*}
\|a_k\| \leq  C^{(1)}\|b_k^{(1)}\|+ C^{(2)}\|b_k^{(2)}\|+\cdots+C^{(n)}\|b_k^{(n)}\|.
\end{align*}
\end{definition}
\begin{lemma}\label{lemmabound}
Under the conditions of Theorem \ref{Th1}, for each $t\in\{0,\cdots,T-\rho\}$, the following statements hold:

(1) Variable $m_k(0)$ is bounded. For $t\geq 1$, $m_k(t)$ is linearly bounded by $e_k(t+\rho-1), e_k(t+\rho-2), \cdots, e_k(\rho)$.

(2) There exist constants $K^{0,t},K^{1,t},\cdots,K^{t-1,t}$ such that
\begin{align*}
|e_k(t+\rho)|&<\varpi_k(t+\rho,\cdots,t)+\sum_{i=0}^{t-1}K^{i,t} |\varpi_k(i+\rho,\cdots,i)|,
\end{align*}
where $\varpi_k(t+\rho,\cdots,t)=x_k(t+\rho)-\hat{\boldsymbol\theta}_k(t)^{\top}\boldsymbol{f}(\boldsymbol{X}_k(t),u_k(t))$.
\end{lemma}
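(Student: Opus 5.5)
Both statements will be proved by strong induction on $t$. The tools are the Lipschitz bounds of Assumption \ref{assu3}, the boundedness of $\hat{\boldsymbol\theta}_k(t)\in\mathcal{B}(\underline{\boldsymbol\theta},R)$ from the projection (Lemma \ref{lemmaestimate}), boundedness of the references $r_k$, and the defining equation \eqref{pred-model} of $u_k(t)$. Throughout, $C$ denotes a $k$-independent constant.

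\textbf{Part (1).} First bound $|u_k(t)|$. By Proposition \ref{pp2}, $u_k(t)\in\mathcal{B}(0,|c'|/d_0)$ with $c'=\hat{\boldsymbol\theta}_k(t)^\top\boldsymbol{f}(\boldsymbol{X}_k^e(t),0)-r_k(t+\rho)$. Since $\|\boldsymbol{f}(\boldsymbol X,0)\|\le \|\boldsymbol{f}(\boldsymbol 0,0)\|+L^{\boldsymbol X}\|\boldsymbol X\|$, this gives $|u_k(t)|\le C+C\|\boldsymbol{X}^e_k(t)\|$. Hence
\[
m_k(t)\le 1+\|\boldsymbol{f}(\boldsymbol{X}_k(t),u_k(t))\|\le C+C\|\boldsymbol{X}_k(t)\|+C\|\boldsymbol{X}^e_k(t)\|.
\]
For $t=0$, every entry of $\boldsymbol{X}_k(0)=\boldsymbol{X}^e_k(0)$ is one of the known bounded initial values of Assumption \ref{assu4}, so $m_k(0)$ is bounded.

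For $t\ge1$, write $x_k(s)=e_k(s)+r_k(s)$ for $s\ge\rho$ and use boundedness of $r_k$ and of the initial values. This bounds $\|\boldsymbol{X}_k(t)\|$ by $C$ plus a combination of $|e_k(\rho)|,\dots,|e_k(t+\rho-1)|$. For the estimates, unroll \eqref{estimateX1}--\eqref{estimateX2}. Each $x^e_k(t+j|t)$ is $\hat{\boldsymbol\theta}_k(t+j-\rho)^\top\boldsymbol{f}(\cdot)$ evaluated at previous estimates, measured states $x_k(s)$ with $s\le t$, and a past input $u_k(s)$ with $s<t$. By the Lipschitz bound and the input bound above, applied inductively in $s$, $|u_k(s)|$ is linearly bounded by $1$ and the errors $e_k$ up to time $s+\rho-1$; here $s+\rho-1\le t+\rho-2$. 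So $\|\boldsymbol X^e_k(t)\|$ and $m_k(t)$ are linearly bounded by $1,e_k(\rho),\dots,e_k(t+\rho-1)$. The constant term is absorbed in the paper's sense; the linear-boundedness definition can be read as allowing a constant, since $m_k\ge1$ anyway.

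\textbf{Part (2).} Decompose
\[
e_k(t+\rho)=\varpi_k(t+\rho,\dots,t)+\big[\hat{\boldsymbol\theta}_k(t)^\top\boldsymbol f(\boldsymbol X_k(t),u_k(t))-\hat{\boldsymbol\theta}_k(t)^\top\boldsymbol f(\boldsymbol X^e_k(t),u_k(t))\big],
\]
using \eqref{pred-model} to replace $r_k(t+\rho)$ by $\hat{\boldsymbol\theta}_k(t)^\top\boldsymbol f(\boldsymbol X^e_k(t),u_k(t))$. The bracket is bounded by $(\|\underline{\boldsymbol\theta}\|+R)L^{\boldsymbol X}\|\boldsymbol X_k(t)-\boldsymbol X^e_k(t)\|$, so it remains to bound the state-estimation errors $x_k(t+j)-x^e_k(t+j|t)$ for $j=1,\dots,\rho-1$.

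Compare the system equation at time $t+j-\rho$ with \eqref{estimateX1}. Adding and subtracting $\hat{\boldsymbol\theta}_k(t+j-\rho)^\top\boldsymbol f(\boldsymbol X_k(t+j-\rho),u_k(t+j-\rho))$ gives
\[
x_k(t+j)-x^e_k(t+j|t)=\varpi_k(t+j,\dots,t+j-\rho)+\hat{\boldsymbol\theta}^\top(\boldsymbol f(\text{true args})-\boldsymbol f(\text{estimated args})).
\]
The arguments differ only in the entries $x_k(t+1),\dots,x_k(t+j-1)$, which are the lower-order estimation errors. A finite induction on $j$ then bounds each estimation error by a linear combination of $|\varpi_k(i+\rho,\dots,i)|$ for $i=t-\rho+1,\dots,t-1$, with nonnegative constants. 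When $i<0$ the corresponding states are exact initial values, so those terms vanish. Collecting terms gives the claimed inequality with constants $K^{i,t}$ that are independent of $k$.

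\textbf{Main obstacle.} The delicate point is bookkeeping, namely keeping every constant independent of $k$. This relies on the projection bound $\|\hat{\boldsymbol\theta}_k\|\le\|\underline{\boldsymbol\theta}\|+R$ and on the global Lipschitz constants, rather than on any trajectory-dependent quantity. Notably, part (2) needs no bound on $u_k$, since the same $u_k(t)$ appears in both terms; it is part (1) that requires the $d_0$-dependent input bound.
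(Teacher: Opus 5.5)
Your proof is correct. Part (1) is essentially the paper's argument: bound $|u_k(t)|\le |c'|/d_0\le C+C\|\boldsymbol{X}^e_k(t)\|$, then unroll the estimator recursively. The paper's own bound is also of the form $m_k(t)\le C+\sum_i C|e_k(i)|$, so including the constant sequence $1$ among the bounding sequences is exactly right. Note, though, that $m_k\ge 1$ is what makes this constant unavoidable; it is not what lets it be absorbed.

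Part (2) takes a genuinely different and more direct route. The paper splits each estimation error as $e_k(t+i)+\big(r_k(t+i)-x^e_k(t+i|t)\big)$. It then uses \eqref{pred-model} at time $t+i-\rho$ to compare estimates made at different times, which gives the intermediate bound $|e_k(t+\rho)|\le|\varpi_k(t+\rho,\cdots,t)|+\sum_{i=\rho}^{t+\rho-1}C|e_k(i)|$. A strong induction on $t$ is then needed to convert all past tracking errors into past residuals.

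You instead read off from the definition of $\varpi_k$ that $x_k(t+j)-x^e_k(t+j|t)$ equals $\varpi_k(t+j,\cdots,t+j-\rho)$ plus a Lipschitz term in the lower-order estimation errors. The same $\hat{\boldsymbol\theta}_k(t+j-\rho)$ and $u_k(t+j-\rho)$ appear on both sides, so a finite induction on $j\le\rho-1$ suffices.

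This buys a sharper, finite-memory estimate. In your version $K^{i,t}=0$ for $i\le t-\rho$, and the remaining constants depend only on $\rho$, $L^{\boldsymbol X}$ and $\|\underline{\boldsymbol\theta}\|+R$. In the paper's recursion, all $K^{i,t}$, $i=0,\cdots,t-1$, are generically nonzero and compound along the horizon. Since \eqref{lineargrowth} still follows by combining your part (2) with part (1), the proof of Theorem \ref{Th1} goes through unchanged. In fact it yields Theorem \ref{Th1}'s bound with the sum restricted to $i\ge\max\{0,t-\rho+1\}$.

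Both your argument and the paper's give $\le$ with $|\varpi_k(t+\rho,\cdots,t)|$, not the strict inequality printed in the statement, which fails when every term vanishes.
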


\emph{Proof of Lemma \ref{lemmabound}}: From the definition of $u_k(t)$ in \eqref{ALCLaw}, one has
\begin{align}\label{ufbounded}
    |u_k(t)| &\leq \frac{|c'|}{d_0}
    \leq C(\|\hat{\boldsymbol\theta}_k(t)\| \|\boldsymbol{f}(\boldsymbol{X}^e_k(t),0)\|+|r_k(t+\rho)|)\nonumber\\
    &\leq C \|\boldsymbol{f}(\boldsymbol{X}^e_k(t),0)\|+C.
\end{align}
The last inequality holds as per the fact that $\|\hat{\boldsymbol\theta}_k(t)\|$ and $|r_k(t+\rho)|$ are bounded. Combining Assumption \ref{assu3} with \eqref{ufbounded}, it is clear that $u_k(t)$ is linearly bounded by $\boldsymbol{X}_k^e(t)$.

\emph{Case of $t<\rho$}: According to Assumption \ref{assu4} and the state estimation setting, $x_k^e(t|t')=x_k(t)$ holds for any $0\leq t\leq \rho-1$ and $0\leq t'\leq t$. Since the vector $\boldsymbol{X}_k^e(0)=[x_k^e(\rho-1|0)$, $\cdots$, $x_k^e(1|0)$, $x_k(0)]^\top=[x_k(\rho-1),\cdots,x_k(1),x_k(0)]^\top$ is bounded by Assumption \ref{assu4}, it follows that $u_k(0)$ is bounded. The vector $\boldsymbol{X}_k^e(1)=[x_k^e(\rho|1),x_k(\rho-1),\cdots,x_k(1)]^\top$ is bounded as long as $x_k^e(\rho|1)$ is bounded. From \eqref{estimateX1}, $x_k^e(\rho|1)$ is bounded due to the boundedness of $x_k^e(\rho-1|1)=x_k(\rho-1),\cdots,x_k^e(2|1)=x_k(2),x_k(1),x_k(0)$ and $u_k(0)$. The boundedness of $u_k(1)$ follows from the boundedness of $\boldsymbol{X}_k^e(1)$. By recursion, the boundedness of $u_k(2),\cdots,u_k(\rho-1)$ can be established.

\emph{Case of $t\geq \rho$}: As per \eqref{estimateX2}, $x_k^e(t+1|t)$ is linearly bounded by $x_k(t),\cdots,x_k(t-\rho+1)$ and $u_k(t-\rho+1)$. Moreover, $x_k^e(t+2|t)$ is linearly bounded by $x_k^e(t+1|t),x_k(t),\cdots,x_k(t-\rho+2)$ and $u_k(t-\rho+2)$. By recursion, it follows that $x^e_k(t+\rho-1|t),x^e_k(t+\rho-2|t),\cdots,x^e_k(t+1|t)$ are all linearly bounded by $x_k(t),\cdots,x_k(t-\rho+1)$ and  $u_k(t-1),u_k(t-2),\cdots,u_k(t-\rho+1)$. Therefore,
\begin{align*}
  |u_k(t)|&  \leq  C+  \sum_{i=1}^\rho C x_k(t-\rho+i) + \sum_{i=1}^{\rho-1} C u_k(t-\rho+i).
\end{align*}
Similarly,
\begin{align*}
  |u_k(t-1)|&
  \leq C+  \sum_{i=1}^\rho C |x_k(t-\rho-1+i)|\\
  &\quad + \sum_{i=1}^{\rho-1} C|u_k(t-\rho-1+i)|.
\end{align*}
By recursion,
\begin{align}\label{ukacbound}
  |u_k(t)|&  \leq  C+\sum_{i=1}^t C|x_k(i)| + \sum_{i=1}^{\rho-1} C|u_k(i)|.
\end{align}
From \eqref{ukacbound} and the boundedness of $u_{k}(0),\cdots,u_{k}(\rho-1)$, $x_{k}(0),\cdots,x_{k}(\rho-1)$, one concludes that $u_k(t)$ is linearly bounded by $x_k(t),x_k(t-1),\cdots,x_k(\rho)$.

From the definition of $m_k(t)$ and the Lipschitz continuity of $\boldsymbol{f}(\boldsymbol{X},u)$, it follows that $m_k(t)$ is linearly bounded by $x_k(t+\rho-1),\cdots,x_k(t)$ and $u_k(t)$. Based on the previous discussion on the upper bound of $u_k(t)$, it is clear that $m_k(0)$ is bounded. For $t\geq 1$, one has
\begin{align*}
|m_k (t)| \leq C+\sum_{i=\rho}^{t+\rho-1}C|x_k(i)| \leq C+\sum_{i=\rho}^{t+\rho-1}C|e_k(i)|,
\end{align*}
where the last inequality holds due to the fact that $x_k(\cdot)=r_k(\cdot)+e_k(\cdot)$ and the reference trajectory $r_k(\cdot)$ is bounded. The first claim of Lemma \ref{lemmabound} is proved.

For the second claim, one has
\begin{align}\label{e(t+rho)zhon}
|e_k(t+\rho)|&=|x_k(t+\rho)-r_k(t+\rho)|\nonumber\\
&\leq |x_k(t+\rho)-\hat{\boldsymbol\theta}_k(t)^{\top}\boldsymbol{f}(\boldsymbol{X}_k(t),u_k(t))|
\nonumber \\
&\quad+|\hat{\boldsymbol\theta}_k(t)^{\top}\boldsymbol{f}(\boldsymbol{X}_k(t),u_k(t))-r_k(t+\rho)|\nonumber\\
&\leq|\varpi_k(t+\rho,\cdots,t)|+C\|\boldsymbol{X}_k(t)-\boldsymbol X_k^{e}(t)\|\nonumber\\
&\leq\hspace{-1mm} |\varpi_k(t\hspace{-0.2mm}\hspace{-0.2mm}+\hspace{-0.2mm}\rho,\cdots,t)|\hspace{-0.5mm}+\hspace{-0.5mm}\sum_{i=1}^{\rho-1}C|x_k(t\hspace{-0.2mm}+\hspace{-0.2mm}i)\hspace{-1mm}-\hspace{-1mm}x_k^e(t\hspace{-0.2mm}+\hspace{-0.2mm}i|t)|  \nonumber\\
&\leq |\varpi_k(t+\rho,\cdots,t)|+\sum_{i=1}^{\rho-1}C |e_k(t+i)|\nonumber\\
&\quad+\sum_{i=1}^{\rho-1} C|r_k(t+i)-x_k^e(t+i|t)|.
\end{align}

Let us look at the term $|r_k(t+\rho-1)-x_k^e(t+\rho-1|t)|$. From \eqref{estimateX1} and \eqref{pred-model}, one obtains
\begin{align}\label{RHS}
&|r_k(t+\rho-1)-x_k^e(t+\rho-1|t)|\nonumber\\
&\leq \sum_{i=1}^{\rho-2} C|x_k^e(t+i|t)-x_k^e(t+i|t-1)|\nonumber\\
&\quad +C|x_k(t)-x_k^e(t|t-1)|.
\end{align}

Due to the Lipschitz condition, one has
\begin{align*}
&|x_k^e(t+1|t)-x_k^e(t+1|t-1)|\leq C|x_k(t)-x_k^e(t|t-1)|,\\
&|x_k^e(t+2|t)-x_k^e(t+2|t-1)| \\
&\leq C|x_k^e(t+1|t)-x_k^e(t+1|t-1)| +C|x_k(t)-x_k^e(t|t-1)|\\
&\leq C|x_k(t)-x_k^e(t|t-1)|.
\end{align*}
By recursion, each term on the right-hand side of \eqref{RHS} is less than $C|x_k(t)-x_k^e(t|t-1)|$, where $C$ is a constant independent of $k$. Therefore, it follows that $|r_k(t+\rho-1)-x_k^e(t+\rho-1|t)|\leq C|x_k(t)-x_k^e(t|t-1)|$.

Similarly, one can verify that, for $i=1,\cdots,\rho-1$,
\begin{align}\label{r-x}
&|r_k(t+\rho-i)-x_k^e(t+\rho-i|t)|\nonumber \\
&\quad\leq\sum_{j=1}^iC|x_k(t-i+j)-x_k^e(t-i+j|t-i)|.
\end{align}

Substituting \eqref{r-x} to \eqref{e(t+rho)zhon} yields
\begin{align}\label{ekt+rhozhon2}
& |e_k(t+\rho)|\leq |\varpi_k(t+\rho,\cdots,t)|+\sum_{i=1}^{\rho-1} C|e_k(t+i)| \nonumber\\
&\qquad+\sum_{i=1}^{\rho-1}\sum_{j=1}^iC|x_k(t-i+j)-x_k^e(t-i+j|t-i)|\nonumber\\
&\quad\leq |\varpi_k(t+\rho,\cdots,t)|+\sum_{i=1}^{2\rho-2} C|e_k(t-\rho+1+i)|\nonumber \\
&\qquad+\sum_{i=1}^{\rho-1}\sum_{j=1}^iC|r_k(t-i+j)-x_k^e(t-i+j|t-i)|.
\end{align}

Using the same process as in \eqref{e(t+rho)zhon}-\eqref{r-x} for $|r_k(t-i+j)-x_k^e(t-i+j|t-i)|$ in \eqref{ekt+rhozhon2}, by recursion and noting that $x_k^e(t'|t'')-x_k(t')=0$ for $0\leq t''<t'\leq \rho-1$, one obtains
\begin{align*}
|e_k(t+\rho)|\leq |\varpi_k(t+\rho,\cdots,t)|+\sum_{i=\rho}^{t+\rho-1}C|e_k(i)|.
\end{align*}
Similarly,
\begin{align*}
&|e_k(t+\rho-1)|\leq |\varpi_k(t+\rho-1,\cdots,t-1)|\hspace{-1mm}+\hspace{-1mm}\sum_{i=\rho}^{t+\rho-2}C|e_k(i)|,\\
&\cdots\\
&|e_k(\rho)|=|\varpi_k(\rho,\cdots,0)|.
\end{align*}
By recursion,
\begin{align}\label{evarpi}
&|e_k(t+\rho)|\leq |\varpi_k(t+\rho,\cdots,t)|+\sum_{i=0}^{t-1}C |\varpi_k(i+\rho,\cdots,i)|\nonumber\\
&\qquad\qquad\ \triangleq \hspace{-0.6mm}|\varpi_k(t+\rho,\cdots,t)|+\hspace{-0.6mm}\sum_{i=0}^{t-1}K^{i,t} |\varpi_k(i+\rho,\cdots,i)|.
\end{align}
The second claim of Lemma \ref{lemmabound} is thus proved. \hfill$\blacksquare$

\emph{Proof of Theorem \ref{Th1}}: From Lemma \ref{lemmaestimate}, one obtains
\begin{align}\label{limitamvarpi}
\lim_{k\to\infty} \frac{(a_k(t)\varpi_k(t+\rho,\cdots,t))^2}{m_k(t)^2}=0.
\end{align}

If $|\epsilon_k(t+\rho)|>\frac{\hat{w}_k(t)}{m_k(t)^2}$, then $a_k(t)=1-\frac{\hat{w}_k(t)}{|\varpi_k(t+\rho,\cdots,t)|}$ and
\begin{align}\label{varpi1}
a_k(t)|\varpi_k(t+\rho,\cdots,t)|=|\varpi_k(t+\rho,\cdots,t)|-\hat{w}_k(t).
\end{align}

If $|\epsilon_k(t+\rho)|\leq \frac{\hat{w}_k(t)}{m_k(t)^2}$, \textit{i.e.,} $\varpi_k(t+\rho,\cdots,t)\leq \hat{w}_k(t)$, then $a_k(t)=0$ and
\begin{align}\label{varpi2}
|\varpi_k(t+\rho,\cdots,t)| \leq a_k(t)|\varpi_k(t+\rho,\cdots,t)|+\hat{w}_k(t).
\end{align}

From \eqref{varpi1} and \eqref{varpi2}, one obtains
\begin{align}\label{varpibound}
|\varpi_k(t+\rho,\cdots,t)|&\leq a_k(t)|\varpi_k(t+\rho,\cdots,t)|+\hat{w}_k(t).
\end{align}

According to Lemmas \ref{lemmaestimate} and \ref{lemmabound}, one has
\begin{align}\label{lineargrowth}
m_k(t)&\leq C+\sum_{i=0}^{t-1} C |\varpi_k(i+\rho,\cdots,i)|\nonumber\\
&\leq C+\sum_{i=0}^{t-1} C a_k(i)|\varpi_k(i+\rho,\cdots,i)|\nonumber\\
&\leq C+C\max_{i\in \{0,\cdots,T-\rho\} } a_k(i)|\varpi_k(i+\rho,\cdots,i)|\nonumber\\
&\leq C+C\max_{l\leq k} a_l(\psi_l-\rho)|\varpi_k(\psi_l,\cdots,\psi_l-\rho)|,
\end{align}
where $ a_l(\psi_l-\rho)|\varpi_k(\psi_l,\cdots,\psi_l-\rho)|=\max_{i\in \{0,\cdots,T-\rho\} } a_l(i)|\varpi_l(i+\rho,\cdots,i)|$. From \eqref{limitamvarpi} and \eqref{lineargrowth}, one has
\begin{align}\label{limitKTL}
\lim_{k\to \infty}\frac{(a_k(\psi_k-\rho)\varpi_k(\psi_k,\cdots,\psi_k-\rho))^2}{(C+C\max_{l\leq k} a_l(\psi_l-\rho)|\varpi_k(\psi_l,\cdots,\psi_l-\rho)|)^2}=0.
\end{align}

According to \eqref{limitKTL} and the key technical lemma in \cite{goodwin2009adaptive} (Chapter 6, pp. 181-182), it follows that
$\lim_{k\to \infty}a_k(\psi_k-\rho)\varpi_k(\psi_k,\cdots,\psi_k-\rho)=0$,
which is equivalent to
\begin{align}\label{limitavarpi}
  \lim_{k\to \infty}a_k(t)\varpi_k(t+\rho,\cdots,t)=0,\ t\in \{0,\cdots,T-\rho\}.
\end{align}

Taking the upper limit on both sides of \eqref{varpibound} yields
\begin{align}\label{52}
 &\limsup_{k\to \infty}\varpi_k(t+\rho,\cdots,t)\nonumber \\
 &\leq  \limsup_{k\to \infty}\big(a_k(t)|\varpi_k(t+\rho,\cdots,t)|+\hat{w}_k(t)\big)\nonumber\\
 &=\lim_{k\to \infty} \hat{w}_k(t).
\end{align}

Taking the upper limit on both sides of \eqref{evarpi} yields
\begin{align}
    &\limsup_{k\to \infty} |e_k(t+\rho)| \nonumber\\
    &\leq \lim_{k\to \infty} \hat{w}_k(t)+\sum_{i=0}^{t-1} K^{i,t}\lim_{k\to \infty} \hat{w}_k(i)\nonumber\\
    &\leq \lim_{k\to \infty}\big(w+|\Tilde{w}_k(t)|+\sum_{i=0}^{t-1}K^{i,t}(w+|\Tilde{w}_k(i)|)\big)\nonumber\\
    &\leq w+\sqrt{2V_0(t)}+\sum_{i=0}^{t-1} K^{i,t}(w+\sqrt{2V_0(i)}).
\end{align}
The last inequality holds due to the fact that $\frac{1}{2}|\Tilde{w}_k(t)|^2\leq V_k(t)\leq V_0(t)$. The proof of Theorem \ref{Th1} is complete.  \hfill$\blacksquare$

\subsection{Proof of Corollary \ref{corollaryrho=1}}\label{APPendixC}

Considering
\begin{align*}
    \varpi_k(t+1,t)=x_k(t+1)-\hat{\boldsymbol\theta}_k(t)^{\top}\boldsymbol{f}(x_k(t),u_k(t)),
\end{align*}
one obtains $e_k(t+1)=\varpi_k(t+1,t)$. Using the same process used to obtain \eqref{52}, one can show that
\begin{align*}
    \limsup_{k\to \infty} |\varpi_k(t+1,t)| \leq \lim_{k\to \infty} \hat{w}_k(t) \leq w+\sqrt{2V_0(t)}.
\end{align*}
This completes the proof.\hfill$\blacksquare$

\subsection{Proof of Corollary \ref{corollarydisturbance}}\label{APPendixD}

For the disturbance-free system, under the AILC scheme \eqref{ALCLaw}, \eqref{GDPA1}, \eqref{GDPAmodify}, \eqref{GDPA5}, \eqref{estimateX1}-\eqref{estimateX3}, one can obtain \eqref{evarpi} and \eqref{limitavarpi} via the same process as in Appendix B. However, since $a_k(t)\equiv1$, \eqref{limitavarpi} becomes
$\lim_{k\to \infty}\varpi_k(t+\rho,\cdots,t)=0$, $t\in \{0,\cdots,T-\rho\}$.
Letting $k$ approach $\infty$ on the both sides of \eqref{evarpi}, one concludes the result in Corollary \ref{corollarydisturbance}.\hfill$\blacksquare$

\subsection{Proof of Lemma \ref{lemma1}} \label{APPendixE}

For any fixed $t$ and $k$, \eqref{iterativesolution} yields
\begin{align}
|\mathrm{u}_k^p(t)-\mathrm{u}_k^{p-1}(t)|&=|\mathcal{T}'(\mathrm{u}_k^{p-1}(t))-\mathcal{T}'(\mathrm{u}_k^{p-2}(t))|\nonumber\\
&\leq (1-\frac{d_0}{l'}) |\mathrm{u}_k^{p-1}(t)-\mathrm{u}_k^{p-2}(t)|\nonumber\\
&\leq (1-\frac{d_0}{l'})^{p-1}|\mathrm{u}_k^{1}(t)-\mathrm{u}_k^{0}(t)|.
\end{align}

Then, for any $p'>p$, one has
\begin{align}\label{limitukp}
&|\mathrm{u}_k^{p'}(t)-\mathrm{u}_k^p(t)|\leq \sum_{i=p+1}^{p'}|\mathrm{u}_k^{i}(t)-\mathrm{u}_k^{i-1}(t)|\nonumber\\
&\quad\leq \sum_{i=p+1}^{p'} (1-\frac{d_0}{l'})^{i-1}|\mathrm{u}_k^{1}(t)-\mathrm{u}_k^{0}(t)|\nonumber\\
&\quad\leq \frac{(1-\frac{d_0}{l'})^p[1-(1-\frac{d_0}{l'})^{p'-p}]}{1-(1-\frac{d_0}{l'})}
|\mathrm{u}_k^{1}(t)-\mathrm{u}_k^{0}(t)|.
\end{align}
Letting $p'$ approach infinity on both sides of \eqref{limitukp}, one obtains
\begin{align*}
|u_k(t)-\mathrm{u}_k^p(t)|\leq \frac{l'(1-\frac{d_0}{l'})^p}{d_0} |\mathrm{u}_k^{1}(t)-\mathrm{u}_k^{0}(t)|.
\end{align*}
Therefore, if $\mathrm{u}_k^{1}(t)=\mathrm{u}_k^{0}(t)$, then $\mathrm{u}_k^{p_o(k,t)}(t)=\mathrm{u}_k^1(t)=u_k(t)$, $|\mathrm{u}_k^{p_o(k,t)}(t)-u_k(t)|=0$. If $\mathrm{u}_k^{1}(t)\neq\mathrm{u}_k^{0}(t)$ then,
\begin{align*}
    |\mathrm{u}_k^{p_o(k,t)}(t)-u_k(t)|\leq \frac{l'(1-\frac{d_0}{l'})^{p_o(k,t)}}{d_0} |\mathrm{u}_k^{1}(t)-\mathrm{u}_k^{0}(t)|<\varepsilon.
\end{align*}
Lemma \ref{lemma1} is proven.  \hfill$\blacksquare$
\subsection{Proof of Theorem \ref{Th2}}\label{APPendixF}

Let $r_k^\ast(t+\rho)=\hat{\boldsymbol\theta}_k(t)^\top \boldsymbol{f}(\boldsymbol{X}_k^e(t),\mathrm{u}_k^p(t))$. Then, for each $t\in \{0,\cdots,T-\rho\}$,
\begin{align}\label{randrstar}
&|r_k(t+\rho)-r_k^\ast(t+\rho)|\nonumber\\
&=\big|\hat{\boldsymbol\theta}_k(t)^\top \big(\boldsymbol{f}(\boldsymbol{X}_k^e(t),u_k(t))-\boldsymbol{f}(\boldsymbol{X}_k^e(t),\mathrm{u}_k^p(t))\big)\big|\nonumber\\
&\leq L^u \|\hat{\boldsymbol\theta}_k(t)\|\cdot |u_k(t)-\mathrm{u}_k^p(t)|\nonumber\\
&\leq L^u \left(\sqrt{2V_0(t)}+\|\boldsymbol\theta(t)\|\right) \varepsilon.
\end{align}

Using the same procedure as in Appendix B, there exist positive constants $K_{\ast}^{i,t},\ i=0,1,\cdots,t-1$, independent of $k$, such that
\begin{align}\label{trackxrstar}
    &\limsup_{k\to \infty}|x_k(t+\rho)-r_k^\ast(t+\rho)|\nonumber\\
    &\leq \lim_{k\to \infty} \hat{w}_k(t)+\sum_{i=0}^{t-1} K^{i,t}_\ast\lim_{k\to \infty} \hat{w}_k(i).
\end{align}

Inequalities \eqref{randrstar} and \eqref{trackxrstar} yield
\begin{align*}
&\limsup_{k\to \infty} |e_k(t+\rho)|\nonumber\\
&\leq \limsup_{k\to \infty} \left(|x_k(t+\rho)\hspace{-0.5mm}-\hspace{-0.5mm}r_k^\ast(t+\rho)|\hspace{-0.5mm}+\hspace{-0.5mm}|r_k^\ast(t+\rho)\hspace{-0.5mm}-\hspace{-0.5mm}r_k(t+\rho)|\right)\nonumber\\
&\leq \lim_{k\to \infty} \hat{w}_k(t)+\sum_{i=0}^{t-1} K^{i,t}_\ast\lim_{k\to \infty} \hat{w}_k(i)\nonumber\\
&\quad+L^u \left(\sqrt{2V_0(t)}+\|\boldsymbol\theta(t)\|\right) \varepsilon\nonumber\\
&\leq w+\sqrt{2V_0(t)}+\sum_{i=0}^{t-1} K_\ast^{i,t}(w+\sqrt{2V_0(i)})\nonumber\\
&\quad+L^u \left(\sqrt{2V_0(t)}+\|\boldsymbol\theta(t)\|\right) \varepsilon.
\end{align*}
The proof of Theorem \ref{Th2} is complete. \hfill$\blacksquare$

\subsection{Proof of Corollary \ref{corollarynodis+appro}}\label{APPendixG}

Let $r_k^\ast(t+\rho)=\hat{\boldsymbol\theta}_k(t)^\top \boldsymbol{f}(\boldsymbol{X}_k^e(t),\mathrm{u}_k^p(t))$. Then, from Corollary \ref{corollarydisturbance}, one has
\begin{align}\label{59}
    \lim_{k\to\infty} |r_k^\ast(t+\rho)-x_k(t+\rho)|=0.
\end{align}
for all $t\in\{0,1,\cdots,T-\rho\}$. From \eqref{randrstar} and \eqref{59}, one obtains
\begin{align*}
    &\limsup_{k\to \infty} |e_k(t+\rho)|\nonumber\\
&\leq \limsup_{k\to \infty} \left(|x_k(t+\rho)\hspace{-0.5mm}-\hspace{-0.5mm}r_k^\ast(t+\rho)|\hspace{-0.5mm}+\hspace{-0.5mm}|r_k^\ast(t+\rho)\hspace{-0.5mm}-\hspace{-0.5mm}r_k(t+\rho)|\right)\nonumber\\
&\leq L^u \left(\sqrt{2V_0(t)}+\|\boldsymbol\theta(t)\|\right) \varepsilon.
\end{align*} The proof of Corollary \ref{corollarynodis+appro} is complete. \hfill$\blacksquare$

\bibliographystyle{IEEEtran}
\bibliography{Ref}

\end{document}